\newif\ifappendixproofs
\renewcommand{\citet}[2]{#1~\cite{#2}}
\renewcommand{\Citet}[2]{#1~\cite{#2}}
\newcommand\institution[1]{#1}
\newcommand\department[1]{\relax}
\newcommand\city[1]{, #1}
\newcommand\province[1]{ #1}
\newcommand\postcode[1]{ #1}
\newcommand\country[1]{, #1}
\newcommand{\cornell}[1]{\institution{Cornell University}\department{Department of Computer Science}\city{Ithaca}\province{NY}\postcode{14850}\country{USA}\\
  \email{#1}}
\newcommand{\observe}[1]{\institution{Observe, Inc.}\city{San Mateo}\province{CA}\postcode{94402}\country{USA}\\
  \email{#1}}
\author{Anonymous Authors}
\institute{}
\author{Silei Ren\inst{1}\orcidID{0000-0002-4182-0211}\and
  Co\c{s}ku Acay\inst{2}\orcidID{0000-0002-0487-1167}\and
  \\
  Andrew C. Myers\inst{1}\orcidID{0000-0001-5819-7588}}
\institute{
  \cornell{sr2262@cornell.edu,andru@cs.cornell.edu}
  \and
  \observe{coskuacay@gmail.com}
}
\title{
  An Algebraic Approach to Asymmetric Delegation and Polymorphic Label Inference
\ifreport
  {\\ \LARGE (Technical Report)}
\fi
}
\titlerunning{Asymmetric Delegation and Polymorphic Label Inference}
\begin{document}

\maketitle

\begin{abstract}
  Language-based information flow control (IFC) enables reasoning about
and enforcing security policies in decentralized applications.
While information flow properties are relatively extensional and compositional,
designing expressive systems that enforce such properties
remains challenging.
In particular, it can be difficult to use IFC labels to model
certain security assumptions, such as semi-honest agents.
Motivated by these modeling limitations, we study the algebraic semantics of
lattice-based IFC label models, and propose a semantic framework
that allows formalizing asymmetric delegation,
which is partial delegation of confidentiality or integrity.
Our framework supports downgrading of information
and ensures their safety through nonmalleable information
flow (NMIF).
To demonstrate the practicality of our framework,
we design and implement a novel algorithm that statically
checks NMIF and a label inference procedure
that efficiently supports bounded label polymorphism,
allowing users to write code generic with respect to labels.
 \end{abstract}

\section{Introduction}
\label{sec:intro}
\emph{Information Flow Control (IFC)}~\cite{sm-jsac,iflow-properties}
is a well-established approach for enforcing information security.
Using \emph{labels}, IFC systems specify
fine-grained policies on information flow that can be fully or partly
enforced through compile-time analysis.
These policies articulate the confidentiality and integrity goals of IFC systems,
which are \emph{security properties}~\cite{iflow-properties}:
hyperproperties~\cite{cs08} that constrain the set of system behaviors.

The most prominent security property for IFC systems is
\emph{noninterference}~\cite{GM82}, but it is
too restrictive in practice.
A major challenge for adopting language-based IFC is providing
developers with expressive yet intuitive ways to specify their intended
security policies.
To capture more nuanced security policies,
the expressiveness of IFC systems is enhanced by
\emph{downgrading mechanisms} such as
\emph{declassification} of confidential information
and \emph{endorsement} of untrusted information.
Misuse of these mechanisms is further mitigated by
enforcing \emph{nonmalleable information flow}~\cite{nmifc} (NMIF),
a security property controlling downgrading.

\emph{Delegation} is another common mechanism for specifying security.
Delegation allows one principal to grant (delegate) power to another,
expressing that the first principal trusts the second.
Delegation can compactly represent important aspects of the system's security
policy: when there is delegation between two principals,
ensuring the delegator's security also necessitates
enforcing the delegatee's security.
Delegation is commonly supported not only in information flow control
systems~\cite{ml-ifc-97,jif, flam,flac},
but also in a wide range of enforcement mechanisms, including access
control~\cite{denning-lattice, integrity, rbac-iflow}
(where delegation is often referred to as a \emph{principal} or \emph{role hierarchy}),
authorization logics~\cite{abadi06, ccd08,  flam, focal, FOLFL},
and capability systems~\cite{li2003delegation, matetic2018delegatee}.

The expressive power of delegation can be increased through
what we call \emph{asymmetric delegation}:
fine-grained delegation of either \emph{confidentiality} or
\emph{integrity}.
Intuitively, when a principal Alice delegates her confidentiality to another
principal Bob, she allows Bob to observe all information visible to her.
When Alice delegates her integrity to Bob, she trusts that all information
accepted by Bob has not been maliciously modified.
With asymmetric delegation, we can model security settings like
the semi-honest trust assumption in cryptographic applications
and the security setting of blockchains.
In the semi-honest setting, principals trust each other to follow the protocol
(trust each other with integrity),
but do not trust each other with their secrets (but not confidentiality).
In the blockchain setting, principals do not trust each other to follow protocols,
but all information is public:
they effectively trust each other with respect to confidentiality.

While asymmetric delegation increases the expressive power of IFC systems,
its precise role—particularly in the presence of downgrading—remains poorly understood.
We address this gap by presenting a general and expressive semantic framework
for IFC labels that formalizes both asymmetric delegation and its interaction with downgrading.
Although prior work~\cite{flac, shtz06} develops IFC systems that
support certain forms of asymmetric delegation, these systems lack
sound and complete NMIF enforcement.
Building on our framework, we develop algorithms for
verifying the associated semantic security properties.

Experience with language-based security highlights the importance of
IFC label inference to reduce the burden on programmers \cite{jif, viaduct-pldi21}.
In addition, allowing programmers to write code that is generic with respect
to labels enhances modularity and code reuse.
We support such generic programming through an efficient label inference
procedure that supports bounded label polymorphism.

To evaluate our approach,
we update the label model of the Viaduct compiler~\cite{viaduct-pldi21}
and extend its static information flow analysis.
Our implementation features a more concise and modular syntax for
specifying trust assumptions,
as well as a more efficient label inference procedure.

The rest of the paper is structured as follows:
\begin{itemize}
  \item \Cref{sec:overview} motivates asymmetric delegation using
  a semi-honest secure multiparty computation (MPC) program.

  \item \Cref{sec:principals} and \cref{sec:hyperproperty} study the effects
  of asymmetric delegation on security properties using a novel semantic framework.

  \item \Cref{sec:algorithm} presents algorithms that statically
  enforce the security properties.

  \item \Cref{sec:inference} introduces an inference procedure
  supporting bounded label polymorphism.

\end{itemize}

\begin{figure}
  \noindent
  \begin{minipage}[t]{.48\textwidth}
    \lstinputlisting[style=custom-small]{figs/semi-honest-mpc-viaduct.code}
    \caption{Yao's Millionaires' problem in Viaduct~\cite{viaduct-pldi21}.
The programmer must manually assign labels to hosts.}
    \label{fig:semi-honest-mpc-viaduct}
  \end{minipage}
  \hfill
  \begin{minipage}[t]{.48\textwidth}
    \lstinputlisting[style=custom-small]{figs/semi-honest-mpc.code}
    \caption{Yao's Millionaires' problem implemented with delegation.
{\small $\alice \join \bob$} is shorthand for
      {\small $\langle \alice \wedge \bob, \alice \vee \bob \rangle$}.}
    \label{fig:semi-honest-mpc}
  \end{minipage}
\end{figure}

\section{A Case for Delegation}
\label{sec:overview}

\subsection{Semi-Honest Attackers in Cryptography}

Asymmetric delegation can capture a wide variety of security settings.
Already mentioned is the semi-honest threat model,
widely studied in the cryptography literature~\cite{yao82}.
In this model, principals correctly follow the protocol,
but attempt to improperly learn other principals' secrets.
Modeling the semi-honest setting in IFC systems remains a challenge.
We first give an example of modeling semi-honest security in
Viaduct~\cite{viaduct-pldi21},
a state-of-the-art compiler that translates information flow policies to
cryptographic protocols.
We then illustrate how delegation improves usability and modularity.

Consider Yao's well-known Millionaires' Problem~\cite{yao82},
where Alice and Bob wish to compare their wealth
without revealing actual numbers.
\Cref{fig:semi-honest-mpc-viaduct} shows a Viaduct implementation.
\Cref{line:viaduct-host-alice,line:viaduct-host-bob}
declare the hosts \lstinline{Alice} and \lstinline{Bob}
and assign them information flow labels that capture the security assumptions.
The hosts are assigned different and incomparable confidentiality labels
(\lstinline{A} for \lstinline{Alice} and \lstinline{B} for \lstinline{Bob})
but the same integrity label (\lstinline{A $\wedge$ B})
to reflect the trust relation in the semi-honest model.
\Cref{line:viaduct-input-alice,line:viaduct-input-bob} gather input from the hosts;
input from a host has the same label as that host.
\Cref{line:viaduct-comparison} stores the result of the comparison in
\lstinline{w}, which has a label following standard IFC rules:
the result of a computation is more secret and less trusted than all of its inputs.
Specifically, \lstinline{w} has a confidentiality of \lstinline{A $\wedge$ B}
since it is derived using secret data from both hosts,
and has an integrity of \lstinline{A $\wedge$ B} since that is the integrity
of both inputs.
Finally, \cref{line:viaduct-decl-alice,line:viaduct-decl-bob} output \lstinline{w}
to \lstinline{Alice} and \lstinline{Bob}, respectively.
Note that sending \lstinline{w} to \lstinline{Alice} leaks information about
\lstinline{Bob}'s secret data (\lstinline{b}), which violates noninterference.
Viaduct requires an explicit \lstinline{declassify} statement to indicate that
information leakage is intentional.

\subsection{Modeling Security with Delegation}
Viaduct models security by encoding trust into labels, but this
approach has problems.
First, programmers must encode security assumptions by carefully crafting host labels,
which becomes tricky in large systems with many assumptions.
Second, this encoding pollutes the entire program.
In \cref{fig:semi-honest-mpc-viaduct}, every label annotation must acknowledge
the semi-honest assumption by carrying around additional integrity
(i.e., \lstinline{A<-} or \lstinline{B<-}).
And third, the encoding breaks modularity.
For example, to add a new host \lstinline{Chuck} to the program,
we would need to edit every label annotation to carry an
extra integrity component of \lstinline{C<-},
requiring changes throughout
the program even though \lstinline{Chuck}
is not involved in this portion of the computation.
Delegation addresses all of these issues.

\Cref{fig:semi-honest-mpc} implements Yao's Millionaires' Problem using delegation.
Hosts are no longer assigned cryptic information flow labels;
instead, \cref{line:yao-delegation} directly states the security assumption:
\lstinline{Alice} and \lstinline{Bob} trust each other for integrity.
Variables have intuitive labels that do not need to repeat the semi-honest
security assumption:
input from \lstinline{Alice} has label \lstinline{Alice}.
Finally, adding a new host \lstinline{Chuck} requires no edits to existing code;
we only need to add the following lines:\footnote{
In fact, we could even support separate compilation as the program need not be
type-checked again:
a program considered secure with fewer assumptions is secure
with more assumptions.}\begin{lstlisting}[style=custom-small]
  host Chuck
  assume Alice = Chuck for integrity
\end{lstlisting}

\subsection{Nonmalleable Information Flow}

Downgrading statements (\lstinline{declassify} and \lstinline{endorse})
deliberately violate noninterference, so their unrestricted use poses a threat
to security.
Prior work~\cite{zm01b, msz06} identifies cases where the attacker can exploit
downgrading to gain undue influence over the execution,
and proposes \emph{robust declassification} and \emph{transparent endorsement}
to limit such cases.

Robust declassification requires that untrusted data is not declassified,
and transparent endorsement requires that secret data is not endorsed.
NMIF combines these two restrictions, which are key to
enabling the Viaduct compiler to securely instantiate programs with
cryptography~\cite{viaduct-pldi21}.

Here, ``secret'' and ``trusted'' are relative to a given attacker,
and NMIF must hold for all attackers.
In practice, the program cannot be type-checked separately for
every possible attacker, so a conservative condition is enforced:
downgraded data must be at least as trusted as it is secret.
For our example program, this condition means \lstinline{w} must have integrity
stronger than or equal to its confidentiality.
This condition is immediate in Viaduct since \lstinline{w} has label
\lstinline{$\langle$Alice $\wedge$ Bob, Alice $\wedge$ Bob$\rangle$}
in \cref{fig:semi-honest-mpc-viaduct}.
On the other hand, the same variable \lstinline{w} in \cref{fig:semi-honest-mpc}
has label \lstinline{$\langle$Alice $\wedge$ Bob, Alice $\vee$ Bob$\rangle$},
which seemingly has weaker integrity than confidentiality
(logically, \lstinline{Alice $\vee$ Bob} does \emph{not} imply
\lstinline{Alice $\wedge$ Bob}).
However, $\alice = \bob$ for integrity, so this label is equivalent to
\lstinline{$\langle$Alice $\wedge$ Bob, Alice $\wedge$ Bob$\rangle$}
using the following derivation:
  \[\alice \vee \bob
  =
  \alice \vee \alice
  =
  \alice = \alice \wedge \alice = \alice \wedge \bob \]
Delegation necessitates equational reasoning under assumptions,
and NMIF creates an interaction between confidentiality and integrity.
The combination of these two features is what makes asymmetric delegation tricky:
the cleaner syntax comes at the cost of additional technical complexity.
The following sections tame this complexity by developing a semantic
framework for labels, and algorithms that follow the semantics.

\section{Semantic Framework}
\label{sec:principals}

\subsection{The Lattice of Principals}

We build our semantic framework upon the \emph{lattice of principals},
used in prior work in authorization logics and information flow systems~\cite{ml-tosem, shtz06, jif, viaduct-pldi21, flam, dclabels, flac}.

A principal $\p \in \p!$ refers to an entity in decentralized systems
that can be either concrete, such as users or server machines,
or abstract, such as RBAC roles~\cite{rbac-iflow}
or quorums~\cite{zm14-plas}.
In IFC systems, they are often used as labels to annotate
policies on use of information~\cite{flam}.
For example, \lstinline{a: Alice} in \cref{fig:semi-honest-mpc}
requires the variable \lstinline{a} to only be written by
principals that can influence \lstinline{Alice}'s data,
and to remain secret to principals
who cannot observe \lstinline{Alice}'s information.

Principals are ordered by authority.
When \q delegates trust to \p,
we say \p \emph{acts for} \q,
written as $\p \actsfor \q$.
Conjunction (the ``and'' logic connective)
between principals $\p \wedge \q$ represents the
least combined authority of $\p$ and $\q$,
and disjunction (``or'')
$\p \vee \q$ represents greatest common authority.
The maximum authority $\strongest$ acts for all other authorities,
and the minimum authority $\weakest$ trusts all other authorities.
More authority is associated with elements lower in the lattice:
\(
  \strongest
  \actsfor
  \p \wedge \q
  \actsfor
  \p
  \actsfor
  \p \vee \q
  \actsfor
  \weakest
\).
\footnote{
It might seem odd to represent maximum authority as $\strongest$
and minimum authority with $\weakest$, since
some prior work (e.g., \cite{flam}) makes the opposite choice.
An intuitive justification:
the ``false'' proposition entails everything,
so no real principal can have authority $\strongest$.
All principals are trusted with $\weakest$.}

Additionally, we use a \emph{delegation context}, with the form
\( \ctx = \p_1 \actsfor \q_1, \cdots, \p_n \actsfor \q_n \),
to specify delegations that are not implied by the logical
structure of the principal lattice.
The declaration \lstinline{assume Alice = Bob for integrity} from
\cref{fig:semi-honest-mpc} is an example of a delegation context,
specifying both $\alice \actsfor \bob$ and $\bob \actsfor \alice$.
Using the delegation context is compatible with much prior work in IFC.
For example,
the \emph{trust configuration} from FLAM~\cite{flam},
\emph{meta-policies} from the Rx model~\cite{shtz06},
\emph{interpretation function} from DLM~\cite{myers-phdthesis},
and \emph{authority lattice} from label algebra~\cite{lblalgebra}
are all delegation contexts written differently.

The lattice of principals can be interpreted as an
authorization logic~\cite{abadi06, garg2006non}
where each principal is a proposition about
authorization policy.
As authorization logics are often
built upon propositional intuitionistic logics,
whose algebraic models are \emph{Heyting algebra}~\cite{Rutherford65},
we assume \p! is distributive.\footnote{
  A Heyting algebra is a distributive lattice that supports
  the \emph{relative pseudocomplement} ($\trustimply$) operation.
We do not need the $\trustimply$ operator until label inference.
}

\subsection{Delegation and Attackers}
In the MPC example, a delegation context makes some principals equivalent.
In this subsection, we give delegation a precise semantics.

In systems with decentralized trust,
all principals see other principals as potential attackers.
In the extreme case where no principal trusts another,
the attacker with respect to each principal controls all other principals.
Formally, we characterize an attacker $\sadv \subseteq \p!$
by the set of principals it controls.

To trust a principal is to disregard the case where it is the attacker.
Conversely, when a principal is attacker-controlled,
so are the principals it acts for. Formally:

\begin{definition}[Consistent Attackers]
  \label{def:consistent-attacker}
  $\sadv$ is consistent with $\ctx$ (\(\sadv \models \ctx\)) when:
  \[
      \forall \p, \q \in \p! \centerdot
      ((\p \actsfor \q) \vee (\p \actsfor \q) \in \ctx) \implies
      (\p \in \sadv \implies \q \in \sadv)
  \]
\end{definition}

The set of consistent attackers is an \emph{attacker model}:
\(\restrict{\sadv!}{\ctx} =
    \setdef
      {\sadv \in \sadv!}
      {\sadv \models \ctx}\).

The semantic trust levels of principals can be compared based on
the set of consistent attackers that control the principals. Formally:

\begin{definition}[Acts-for Semantics]
  \label{def:acts-for-delegation}
  \p acts for \q (written \( \ctx \models \p \sactsfor \q\)) when:
  \[
    \forall \sadv \in \restrict{\sadv!}{\ctx} \centerdot
    \p \in \sadv \implies \q \in \sadv
  \]
\end{definition}

We use ``$\sactsfor$'' to denote the semantics of the acts-for relation
``$\actsfor$''.
When viewing principals as authorization propositions,
``acts-for'' stands for ``implies'', and a delegation context is a theory
(list of propositions).
Each consistent attacker is a consistent interpretation (truth assignment)
of the principals and the delegation context,
where 1 is assigned to the principals the attacker controls.

In fact, a delegation context determines a \emph{congruence relation}\footnote{A congruence is an equivalence relation that preserves the lattice structure.}
over the lattice of principals,
where
\( \p \equiv_{\ctx} \q \)
is defined as
\(
  (\ctx \models \p \sactsfor \q)
  \wedge
  (\ctx \models \q \sactsfor \p)
\).
As a result, \( \equiv_{\ctx} \)
induces a quotient lattice \( \quotient{\p!}{\equiv_{\ctx}} \)
where mutually delegating principals are in the same equivalence class.
This quotient lattice is precisely the
\emph{Lindenbaum algebra}~\cite{Blok1989-BLOAL} of the theory \ctx:
the ``smallest'' algebraic model \p! where \ctx holds.

\begin{theoremrep}[Algebraic Model]
  \label{thm:correctness}
    The algebraic model of the principal lattice $\p!$ under delegation context
    $\ctx$ is the quotient lattice \( \quotient{\p!}{\equiv_{\ctx}} \).\ifreport\relax\else
      \footnote{Full proofs of all theorems from this paper are available in the technical report~\cite{ifc-delegation-tr}.}
    \fi
\end{theoremrep}
\begin{proof}
  Follows from the definition of Lindenbaum--Tarski Algebra
of propositional intuitionistic logic.
The Lindenbaum algebra of theory $\ctx$, $\quotient{\p!}{\equiv_{\ctx}}$,
is the initial algebra of the category of Heyting
algebras that are consistent with $\ctx$.
In human language, there is a lattice homomorphism from
$\quotient{\p!}{\equiv_{\ctx}}$ to every algebraic model $\mathbb{J}$ of $\ctx$.
 \end{proof}

The semantics of consistent attackers makes them the \emph{prime filters}
of the lattice of principals.
This is a result of the
Stone's Representation Theorem of Distributive Lattices \cite{Stone1938},
which says elements from distributive lattices can be fully characterized
by their prime filters.

\begin{theoremrep}[Attacker Model]
  \label{thm:attacker}
    $\restrict{\sadv!}{\ctx}$ is the set of
    \emph{prime filters} of \( \quotient{\p!}{\equiv_{\ctx}} \).
\end{theoremrep}
\begin{proof}
  Since each attacker is a truth assignment of some Heyting algebra $\mathbb{J}$,
it can be characterized by a homomorphism from $\mathbb{J}$ to the
two-point lattice $2 = \{\top, \bot\}$.
Therefore, each attacker is uniquely characterized by
a lattice homomorphism from $\quotient{\p!}{\equiv_{\ctx}}$ to 2,
which can in turn be characterized by its kernel (the set of $\mathbb{J}$
being mapped to $\top$).
By construction, each kernel is a prime filter
of $\quotient{\p!}{\equiv_{\ctx}}$.
 \end{proof}

Prime filters have intuitive interpretations,
which abstracts and generalizes
attacker models from prior work \cite{mpc-adversary-structure, nmifc}.
\( \sadv \subseteq \p! \) is a prime filter when:
  \begin{itemize}
    \item
      \( \weakest \in \sadv \):
      All attackers control the weakest authority;

    \item
      \( \strongest \not\in \sadv \):
      No attacker controls the strongest authority;

    \item
      If \( \p \actsfor \q \) and \( \p \in \sadv \), then \( \q \in \sadv \):
      Attackers are consistent;

    \item
      If \( \p \in \sadv \) and \( \q \in \sadv \),
      then \( \p \wedge \q \in \sadv \):
      An attacker controls the least combined authority of
      principals it controls;

    \item
      If \( \p \vee \q \in \sadv \),
      then either \( \p \in \sadv \) or \( \q \in \sadv \):
      If two principals are not controlled by an attacker,
      neither is their greatest common authority.\end{itemize}

\subsection{Labels}
\label{sec:labels}

Information flow systems mainly consider two aspects of security:
confidentiality (read authority) and integrity (write authority).
To express differing confidentiality and integrity policies,
we use \emph{the lattice of labels},
which are pairs of principals \( \lbl! = \p! \times \p! \).
For a label \( \lbl = \pair{\cc}{\ic} \),
\cc represents confidentiality and \ic represents integrity.
Like principals, each label reflects the authority required for a principal to
access information.
For example, information labeled $\pair{\strongest}{\weakest}$ can be read by
no principal except the strongest \strongest, but it can be influenced by any principal.

A label $\lbl$ act for $\lbl'$ when both $\lbl$ acts for $\lbl'$ for both
confidentiality and integrity.
Similarly, conjunction/disjunction of labels is defined by the
conjunction/disjunction of their confidentiality and integrity.
An asymmetric delegation context is a pair of different delegation
contexts $\Ctx = \pair{\cctx}{\ictx}$.

In general, \emph{asymmetric attackers} $\adv \in \adv! = \sadv! \times \sadv!$
may control different principals for confidentiality and integrity.
We write \( \adv = \pair{\cadv \in \sadv!}{\iadv \in \sadv!} \),
where \cadv represents the principals \adv controls for confidentiality,
and \iadv those for integrity.
Consequently, the attacker model $\adv!$ is a pair of prime filters.

As visualized in \cref{fig:ifc-lattice},
each attacker defines secret ($\secret$), public ($\public$),
trusted ($\trusted$) and untrusted ($\untrusted$) sets over
the lattice of labels.
\begin{align*}
\public_{\langle \cadv, \iadv \rangle} &=
\setdef{\langle \cc, \ic \rangle}{\cc \in \cadv}~,~~
\untrusted_{\langle \cadv, \iadv \rangle} =
\setdef{\langle \cc, \ic \rangle}{\ic \in \iadv}, \\
\secret_{\langle \cadv, \iadv \rangle} &=
\setdef{\langle \cc, \ic \rangle}{\cc \not\in \cadv}~,~~
\trusted_{\langle \cadv, \iadv \rangle} =
\setdef{\langle \cc, \ic \rangle}{\ic \not\in \iadv}
\end{align*}

\begin{figure}
    \centering
    \begin{tikzpicture}

\draw[thick] (0,2) -- (2,0) -- (0,-2) -- (-2,0) -- cycle;

\node at (0,2.3) {$\langle \strongest,\weakest \rangle$};
  \node at (2.5,0) {$\langle \weakest,\weakest \rangle$};
  \node at (-2.5,0) {$\langle \strongest,\strongest \rangle$};
  \node at (0,-2.3) {$\langle \weakest,\strongest \rangle$};

\draw (-0.7, 1.3) -- (1.3,-0.7);
  \draw (-1,-1) -- (1,1);
  \fill (0.3, 0.3) circle (1.5pt);

\draw[->] (-2, 0.4) -- (-0.4,2)  node[midway, rotate=45, above] {Integrity};
  \draw[->] (-2, -0.4) -- (-0.4,-2) node[midway, rotate=-45, below] {Confidentiality};

\draw[dotted] (-2, 0) -- (2, 0);

\node at (-0.85, 0.2) {$\mathcal{S\cap T}$};
  \node at (1.15, 0.2) {$\mathcal{P\cap U}$};
  \node at (0.15, -0.85) {$\mathcal{P\cap T}$};
  \node at (0.15, 1.15) {$\mathcal{S\cap U}$};

\draw[->] (-2, 2.6) -- (2, 2.6) node[midway, above] {Authority ($\sactsfor$)};
  \draw[->] (3, -2) -- (3, 2) node[midway, rotate=-90, above] {Information flow ($\flowsto$)};

\end{tikzpicture}
     \caption{The lattice of labels $(\lbl!, \sactsfor)$ and
    the lattice of information flow $(\lbl!, \flowsto)$ share the same
    underlying set, but use a different ordering.
The dotted line depicts labels with equal confidentiality and integrity:
    strictly above are compromised labels,
    and on or below are uncompromised labels.
}
    \label{fig:ifc-lattice}
\end{figure}

\section{Security Hyperproperties and Delegation}
\label{sec:hyperproperty}

Our semantic framework formalizes a key insight relating the attacker model and
delegation: more delegation means fewer attackers.
In turn, fewer attackers should make it easier for programs to be considered secure.
To express delegation, we extend prior definitions of IFC security
\emph{hyperproperties}~\cite{cs08} with our semantic framework.
Rather than a standard proof of security for an IFC-typed core calculus,
we abstract away from computation to concentrate on the core judgment in IFC systems:
when it is safe to relabel information, under a delegation context.

The simple system has states $\state \in \state!$, intentionally left unspecified.
An execution of the system emits a trace \trace, which is a sequence of states.
Define the \emph{behavior} $\behav \in \mathbb{B}$
of a program to be the \emph{set} of possible
execution traces it can emit.
A hyperproperty $\HP \subseteq \mathbb{B}$ is a set of behaviors.
A program satisfies a hyperproperty when its behavior is a member of
the hyperproperty.

In decentralized IFC systems,
hyperproperties $\HP_{\adv}$ are parameterized by the choice of attacker:
a program secure against one attacker may be insecure against another.
Let the hyperproperty $\HP_{\adv!}$ be the hyperproperty that characterizes
programs that are secure against all possible attackers from $\adv!$.
A behavior $B$ of a program falls into $\HP_{\adv!}$ precisely when
$B \in \HP_{\adv}$ for all $\adv \in \adv!$:
\begin{mathpar}
  \HP_{\adv!} = \bigcap_{\adv \in \adv!} \HP_{\adv}

  \Ctx(\HP_{\adv!}) = \bigcap_{\adv \in \restrict{\adv!}{\Ctx}} \HP_{\adv}
\end{mathpar}

When a program assumes a static delegation context $\Ctx$,
the attacker model is further restricted to the ones
consistent with $\Ctx$.
Therefore, a delegation context can be understood as
a hyperproperty transformer.
It follows that hyperproperties accept at least as many programs
after a delegation context is added.
\begin{theorem}
$\mathbb{HP}_{\adv!} \subseteq \Ctx(\mathbb{HP}_{\adv!})$.
\end{theorem}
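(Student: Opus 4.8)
The plan is to unfold both sides of the inclusion to their definitions as big intersections of the per-attacker hyperproperties $\HP_{\adv}$, and then observe that the only difference between the two sides is the index set over which the intersection is taken. Since intersection is antitone in its index set—intersecting over a \emph{larger} family of sets yields a \emph{smaller} result—it suffices to show that the consistent attacker model $\restrict{\adv!}{\Ctx}$ is a subset of the full attacker model $\adv!$. This is precisely the ``more delegation means fewer attackers'' slogan made formal.

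First I would recall that $\adv! = \sadv! \times \sadv!$ and that, lifting \cref{def:consistent-attacker} componentwise to an asymmetric context $\Ctx = \pair{\cctx}{\ictx}$, we have $\restrict{\adv!}{\Ctx} = \restrict{\sadv!}{\cctx} \times \restrict{\sadv!}{\ictx}$, where each $\restrict{\sadv!}{\ctx} = \setdef{\sadv \in \sadv!}{\sadv \models \ctx}$ is carved out of $\sadv!$ by a predicate and hence is a subset of $\sadv!$. Taking products preserves this, so $\restrict{\adv!}{\Ctx} \subseteq \adv!$.

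Then the conclusion is immediate: for any behavior $B \in \HP_{\adv!} = \bigcap_{\adv \in \adv!} \HP_{\adv}$ we have $B \in \HP_{\adv}$ for every $\adv \in \adv!$, and in particular for every $\adv \in \restrict{\adv!}{\Ctx}$; therefore $B \in \bigcap_{\adv \in \restrict{\adv!}{\Ctx}} \HP_{\adv} = \Ctx(\HP_{\adv!})$.

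The ``hard part'' here is conceptual rather than computational: there is no nontrivial calculation, but one must be careful to articulate precisely \emph{why} restricting to consistent attackers can only remove attackers and never introduce new ones—that is, that imposing a delegation context acts monotonically on the attacker model in the direction that shrinks it. This directionality is exactly what makes the stated inclusion hold while its converse fails in general, so it is worth stating explicitly even though the remaining set-theoretic manipulation is routine.
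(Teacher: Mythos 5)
Your proof is correct and matches the paper's intended argument, which treats the inclusion as immediate from the definitions: $\restrict{\adv!}{\Ctx}$ is carved out of $\adv!$ by a consistency predicate, so the intersection defining $\Ctx(\HP_{\adv!})$ ranges over fewer attackers and is therefore a superset. The paper offers no further detail, so your explicit unfolding and the antitonicity observation are exactly the right (and only) content needed.
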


This matches our intuition about static delegation:
the more trust assumptions, the fewer reasonable attackers,
the more programs considered secure.
\subsection{Noninterference and the Lattice of Information Flow}

For confidentiality, noninterference~\cite{sm-jsac, denning-lattice} demands that
information should not flow from high
to low (secret to public).
Noninterference of integrity requires that information should not
flow from low to high (untrusted to trusted).
To illustrate how delegation affects noninterference,
we adapt \citet{Clarkson and Schneider's}{cs08} definition of
\emph{observational determinism}~\cite{GM82}.

\begin{definition}[Observational Determinism]
  \label{def:OD}
  Let $\low$ be any set of \emph{low} labels.
  Observational determinism $\OD$ is the hyperproperty:
  
\begin{align*}
\OD_{\low} = \{B \mid \forall t_1, t_2 \in B \centerdot
t_1^0 =_{\low} t_2^0 \implies t_1 \approx_{\low} t_2 \}
\end{align*}
\end{definition}

State $t^0$ denotes the initial state of the trace $t$.
We leave the definition of low equivalence between states unspecified,
as in prior work on knowledge-based security
\cite{landauer93, 6266168, li2021generalpurposedynamicinformationflow, soloviev2023security}.

Noninterference for confidentiality
$\CNI_{\adv} = \OD_{\public_{\adv}}$
and integrity
$\INI_{\adv} = \OD_{\trusted_{\adv}}$
are mere instantiations of observational determinism
over public and trusted labels for some attacker $\adv$.
For an attacker model $\restrict{\adv!}{\Ctx}$:
\[ \Ctx(\NI_{\adv!}) =\bigcap_{\adv \in \restrict{\adv!}{\Ctx}}
{(\OD_{\public_{\adv}} \cap
\OD_{\trusted_{\adv}})} \]

Prior work \cite{GM84, iflow-properties, ml-ifc-97}
enforces low equivalence by ensuring that low-labeled information
is not influenced by high-labeled information.
Concretely, a dynamic \emph{relabel} is safe when
it does not relabel high-labeled information to a low label.
We formalize safe relabeling
as the \emph{flows-to} relation.

\begin{definition}[Flows-to]
  \label{def:flows-to}
  Label $\lbl$ securely flows to $\lbl'$ ($\Ctx \models \lbl \flowsto \lbl'$) when:
  \[
    \forall \adv \in \restrict{\adv!}{\Ctx} \centerdot
    (\lbl' \in \public_{\adv} \implies \lbl \in \public_{\adv})
      \wedge
    (\lbl' \in \trusted_{\adv} \implies \lbl \in \trusted_{\adv})
  \]
  Equivalently,
  \(
    \pair{\cctx}{\ictx}
    \models
    \pair{\cc}{\ic} \flowsto \pair{\cc'}{\ic'}
  \)
  when
  \( \cctx \models \cc' \sactsfor \cc \)
  and
  \( \ictx \models \ic \sactsfor \ic' \).
\end{definition}

As visualized in~\cref{fig:ifc-lattice},
flows-to and the acts-for define two lattices
on the same underlying set of labels.
The Lattice of Information operators are given by:
  \[
  \pair{\cc_1}{\ic_1} \sqcup \pair{\cc_2}{\ic_2} =
  \pair{\cc_1 \wedge \cc_2}{\ic_1 \vee \ic_2} ~~~~~~~
  \pair{\cc_1}{\ic_1} \sqcap \pair{\cc_2}{\ic_2} =
  \pair{\cc_1 \vee \cc_2}{\ic_1 \wedge \ic_2}
\]

\subsection{Downgrading and Nonmalleable Information Flow}
\label{sec:dynamic-policy}

Some programs, such as the MPC example from \cref{fig:semi-honest-mpc},
intentionally break noninterference.
Prior work on dynamic security policies either
achieve downgrading by \emph{relabeling} or by \emph{dynamic delegation}.
Visually, relabeling moves information downward in
\cref{fig:ifc-lattice} and dynamic delegation
moves the attacker partition leftward.

\subsubsection{Nonmalleable Information Flow (NMIF)}
\label{sec:nmifc}

To prevent misuse of downgrades by relabeling,
\citet{Cecchetti et al.}{nmifc} propose NMIF,
a security hyperproperty that combines robust declassification~\cite{zm01b}
with transparent endorsement.

Robust declassification requires that secret information
flow to public only when the information is trusted.
This restriction ensures that attackers do not influence disclosure of
information to them.
A declassification is only robust
when it declassifies secret--trusted information ($\secret_{\adv} \cup \trusted_{\adv}$).

Transparent endorsement is a dual condition that allows untrusted
information to influence trusted information only when
the information is public to the attacker.
It only allows endorsement of public--untrusted information
($\public_{\adv} \cap \untrusted_{\adv}$).

Therefore, secret--untrusted information should not be downgraded.
Indeed, the key recipe to enforcing NMIF is to enforce
noninterference for public or trusted information~\cite{nmifc}:
\[\NI^{\uncompromised{}}_{\adv!} = \OD_{\trusted_{\adv} \cup \public_{\adv}}\]

A label is \emph{compromised}
when it is secret--untrusted for some attacker~\cite{nmifc, zsm19}.
To enforce NMIF, it suffices to reject downgrading information
with compromised labels, so that there is no flow out from compromised labels.

Unfortunately, NMIF against $\restrict{\adv!}{\ctx}$ rejects all downgrades.
Namely, all labels (except for $\pair{\weakest}{\strongest}$) are compromised
against the attacker \( \adv_{\top} = \pair{\{\weakest\}}{\p!} \)
(it controls every principal for integrity and controls no principal for confidentiality).
Therefore, further restrictions on the attacker model are needed.

\subsubsection{NMIF Attackers}
\label{sec:nmifc-attacker}

Prior work \cite{zm01b, zsm19} assumes
attackers control more confidentiality than integrity.
We call them \emph{valid attackers}:

\begin{definition}[Valid Attackers]
  \label{def:valid-attackers}
  \(
    \validattackers =
    \setdef
      {\pair{\cadv}{\iadv} \in \adv!}
      {\iadv \subseteq \cadv}
  \)
\end{definition}

Restricting attackers to valid ones makes our framework mirror attacker models
studied in the cryptography literature~\cite{yao82},
where a principal is honest (not controlled by attacker),
semi-honest (controlled by attacker for confidentiality),
or malicious (controlled by attacker for both confidentiality and integrity).
The valid attacker restriction excludes
unrealistic ``malicious but incurious'' attackers.

In fact, much existing work satisfies the valid-attacker assumption by construction.
For example, robust declassification \cite{zm01b} originally
defines integrity and confidentiality by equivalence relations over system states.
The state transitions an active attacker may perform are, by construction,
observable by the attacker.

\begin{definition}[Uncompromised Labels]
  \label{def:uncompromised-labels}
  Label \lbl is uncompromised under \Ctx,
  written \( \Ctx \models \uncompromised{\lbl} \),
  when \lbl is either public or trusted for all \emph{valid attackers}:
  \[
    \forall \adv \in \restrict{\validattackers}{\Ctx} \centerdot
    \lbl \in \public_{\adv} \cup \trusted_{\adv}
  \]
\end{definition}

It follows that labels of principals are uncompromised:
\(\Ctx \models \uncompromised{\pair{\p}{\p}}\).

Uncompromised labels have an alternative characterization:
they are the labels with at least as much integrity as confidentiality.
Of course, in the presence of asymmetric delegation,
we cannot directly compare a label's confidentiality and integrity components
since each component has a different set of delegations.
We circumvent this problem by introducing a witnessing principal $\rr$
who has no more integrity than $\ic$ and no less confidentiality than $\cc$.

\begin{toappendix}
  \begin{definition}[Closures]
  Define upward and downward closed sets as follows:
  \begin{mathpar}
    \upset{\p} = \setdef{\q \in \p!}{\p \actsfor \q}

    \downset{\p} = \setdef{\q \in \p!}{\q \actsfor \p}
\end{mathpar}
\end{definition}

\begin{definition}[Equivalence Classes]
  The equivalence class of a principal with respect to a delegation context is:
  \[
    \closure[\ctx]{\p}
    =
    \setdef{q \in \p!}{
      (\ctx \models \p \sactsfor \q)
      \wedge
      (\ctx \models \q \sactsfor \p)
    }
  \]
We extend this notation to denote the equivalence closure of a set:
  \[
    \closure[\ctx]{\p*}
    =
    \bigcup_{\p \in \p*}{\closure[\ctx]{\p}}
  \]
\end{definition}

\begin{remark}
  We denote the \emph{set} of equivalence classes using standard notation:
  \[
    \p* / \ctx
    =
    \p* / {\equiv_{\ctx}}
    =
    \setdef{\closure[\ctx]{\p}}{\p \in \p*}
  \]
  for \( \p* \subseteq \p! \) (we might have \( \p* = \p! \)).
This is different from \( \closure[\ctx]{\p*} \subseteq \p! \).
\end{remark}

\begin{remark}
  Note that
  \( \closure[\ctx]{\p} \in \p! / \ctx \)
  and
  \( \upset{\closure[\ctx]{\p}} \subseteq \p! / \ctx \).
In particular,
  \( \upset{\closure[\ctx]{\p}} \)
  is a set of sets of principals, not a set of principals.
\end{remark}

\begin{theorem}[Prime Ideal/Filter]
  \label{thm:prime-filter}
  Let \( L \) be a distributive lattice,
  \( I \subseteq L \) an ideal,
  and \( F \subseteq L \) a filter
  such that
  \( I \cap F = \emptyset \).
Then there exists a prime ideal \( P \) and a prime filter \( Q \) such that
  \begin{mathpar}
  I \subseteq P

  F \subseteq Q

  P \cap Q = \emptyset
  \end{mathpar}
\end{theorem}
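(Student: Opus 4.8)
The plan is to prove this classical Birkhoff--Stone separation result by a Zorn's lemma argument: enlarge $I$ to an ideal $P$ that is \emph{maximal} among ideals disjoint from $F$, show that this maximality forces $P$ to be prime, and then take $Q = L \setminus P$. Throughout, filters and ideals are taken to be nonempty, as usual.

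First I would consider the collection $\mathcal{I}$ of all ideals $J$ with $I \subseteq J$ and $J \cap F = \emptyset$, partially ordered by inclusion. It is nonempty since $I \in \mathcal{I}$, and the union of any nonempty chain in $\mathcal{I}$ is again an ideal (the chain is directed, so the union is closed under binary joins, and it is clearly downward closed) that still avoids $F$. So Zorn's lemma yields a maximal element $P \in \mathcal{I}$. Note $P$ is proper, since it misses the nonempty set $F$, and it contains $I$.

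Next I would show $P$ is prime. Suppose $a \wedge b \in P$ but $a \notin P$ and $b \notin P$. The set $\{x \in L \mid x \le p \vee a \text{ for some } p \in P\}$ is the smallest ideal containing $P \cup \{a\}$ (here I use that $P$ is downward closed and join-closed), and it strictly contains $P$ because it contains $a$; since it still contains $I$, maximality of $P$ forces it to meet $F$, so there are $p_1 \in P$ and $f_1 \in F$ with $f_1 \le p_1 \vee a$. Symmetrically there are $p_2 \in P$ and $f_2 \in F$ with $f_2 \le p_2 \vee b$. Since $F$ is a filter, $f_1 \wedge f_2 \in F$, and by distributivity
\[
  f_1 \wedge f_2 \;\le\; (p_1 \vee a) \wedge (p_2 \vee b)
  \;=\; (p_1 \wedge p_2) \vee (p_1 \wedge b) \vee (a \wedge p_2) \vee (a \wedge b).
\]
Each of the four joinands lies in $P$: the first three are below $p_1$ or $p_2$, and the last is $a \wedge b \in P$ by assumption; as $P$ is an ideal, their join is in $P$, hence $f_1 \wedge f_2 \in P$, contradicting $P \cap F = \emptyset$. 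Therefore $P$ is a prime ideal.

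Finally I would set $Q = L \setminus P$. Since $P$ is downward closed, $Q$ is upward closed; since $P$ is prime, $Q$ is closed under meet; since $P$ is join-closed, $a \vee b \in Q$ implies $a \in Q$ or $b \in Q$; and $Q$ is nonempty and proper because $P$ is proper and nonempty. Hence $Q$ is a prime filter. By construction $I \subseteq P$ and $P \cap Q = \emptyset$, and $F \cap P = \emptyset$ gives $F \subseteq Q$, which completes the proof. There is no deep obstacle here: the only step demanding care is the primality argument, where the distributive law must be applied to $(p_1 \vee a) \wedge (p_2 \vee b)$ and each resulting term recognized as a member of $P$; the chain-union verification for Zorn and the dualization producing the filter are routine. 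It is worth recalling, however, that the proof is nonconstructive, relying on the axiom of choice (it is in fact equivalent to the Boolean prime ideal theorem).
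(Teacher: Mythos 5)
Your proof is correct. The paper itself gives no proof of this theorem---it is invoked as the classical Birkhoff--Stone prime ideal/filter separation theorem for distributive lattices---so there is nothing to compare against; your Zorn's-lemma argument (extend $I$ to an ideal maximal among those disjoint from $F$, use distributivity on $(p_1 \vee a) \wedge (p_2 \vee b)$ to establish primality, and take $Q$ to be the complement) is the standard proof, and each step, including the description of the ideal generated by $P \cup \{a\}$ and the duality making $L \setminus P$ a prime filter, checks out.
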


 \end{toappendix}

\begin{theoremrep}
  \label{thm:alternative-uncompromised}
  \(
    \pair{\cctx}{\ictx} \models \uncompromised{\pair{\cc}{\ic}}
    \iff
    \exists \rr \in \p! \centerdot
      (\ictx \models \ic \sactsfor \rr)
      \wedge
      (\cctx \models \rr \sactsfor \cc)
  \).
\end{theoremrep}
\begin{proof}
  We prove each direction separately.
\begin{pcases}
  \pcase{$\implies$}
    Assume, for contradiction, that
    \( \ic \not\in \closure[\ictx]{\closure[\cctx]{\downset{\cc}}} \).
The sets
    \( \upset{\closure[\ictx]{\ic}} \)
    and
    \( \closure[\cctx]{\downset{\cc}} / \ictx \)
    are disjoint
    and form a filter/ideal pair of \(\p! / \ictx \).
Thus, \cref{thm:prime-filter} gives a prime filter
    \( \iadv_0 \) of \( \p! / \ictx \) with
    \( \upset{\closure{\ic}} \subseteq \iadv_0 \)
    and
    \( \iadv_0 \cap (\closure[\cctx]{\downset{\cc}} / \ictx) = \emptyset \).

    Let
    \( \iadv = \bigcup{\iadv_0} \).
Note that
    \( \iadv \)
    and
    \( \closure[\ictx]{\closure[\cctx]{\downset{\cc}}} \)
    are disjoint and
    \(
      \closure[\cctx]{\downset{\cc}}
      \subseteq
      \closure[\ictx]{\closure[\cctx]{\downset{\cc}}}
    \),
    so
    \( \iadv \)
    and
    \( \closure[\cctx]{\downset{\cc}} \)
    are disjoint as well.
This in turn implies
    \( \iadv / \cctx \)
    and
    \( \downset{\cc} / \cctx \)
    are disjoint.
Moreover, these sets form a filter/ideal pair in \( \p! / \cctx \),
    so \cref{thm:prime-filter} gives a prime filter
    \( \cadv_0 \) of \( \p! / \cctx \)
    with
    \( \iadv / \cctx \subseteq \cadv_0 \)
    and
    \( \cadv_0 \cap (\downset{\cc} / \cctx) = \emptyset \).

    Now, define
    \( \cadv = \bigcup{\cadv_0} \)
    and note that
    \[
      \iadv
      =
      \bigcup{(\iadv / \cctx)}
      \subseteq
      \bigcup{\cadv_0}
      =
      \cadv
    \]
Since \( \iadv_0 \) is an ideal of \( \p! / \ictx \),
    we have \( \iadv \models \ictx \).
Since \( \cadv_0 \) is an ideal of \( \p! / \cctx \),
    we have \( \cadv \models \cctx \).
Combining these three results, we have
    \( \pair{\cadv}{\iadv} \in \restrict{\validattackers}{\pair{\cctx}{\ictx}} \).

    Because
    \( \pair{\cctx}{\ictx} \models \uncompromised{\pair{\cc}{\ic}} \)
    and
    \(
      \ic
      \in
      \upset{\closure[\ictx]{\ic}}
      \subseteq
      \iadv_0
\implies
\ic
      \in
      \iadv
    \),
    we have \( \cc \in \cadv \).
However, \( \cc \in \closure[\cctx]{\downset{\cc}} \),
    which contradicts the fact that
    \( \cadv_0 \) and \( \downset{\cc} / \cctx \) are disjoint.
Thus, we must have
    \( \ic \in \closure[\ictx]{\closure[\cctx]{\downset{\cc}}} \).

    Finally,
    \( \ic \in \closure[\ictx]{\closure[\cctx]{\downset{\cc}}} \)
    means there exists \( \rr \in \closure[\cctx]{\downset{\cc}} \)
    such that \( \ic \in \closure[\ictx]{\rr} \).
By definition, we have
    \( \ictx \models \ic \sactsfor \rr \).
Moreover,
    \(  \rr \in \closure[\cctx]{\downset{\cc}} \)
    means there exists \( \cc' \in \downset{\cc} \)
    such that \( \rr \in \closure[\cctx]{\cc'} \).
By definition,
    \( \cctx \models \rr \sactsfor \cc' \),
    and \cref{thm:attacker} with \( \cc' \in \downset{\cc} \)
    gives \( \cctx \models \cc' \sactsfor \cc \).
By transitivity, we have
    \( \cctx \models \rr \sactsfor \cc \).

  \pcase{$\impliedby$}
    Assume there exists \rr such that
    \( \ictx \models \ic \sactsfor \rr \)
    and
    \(  \cctx \models \rr \sactsfor \cc \).
We claim $\pair{\cc}{\ic}$ is public whenever it is untrusted.
Formally, let
    \( \pair{\cadv}{\iadv} \in \validattackers \)
    such that
    \( \pair{\cadv}{\iadv} \models \pair{\cctx}{\ictx} \),
    and assume \( \ic \in \iadv \).
We need to show \( \cc \in \cadv \).

    Since
    \( \iadv \models \ictx \),
    \( \ictx \models \ic \sactsfor \rr \),
    and
    \( \ic \in \iadv \),
    we have
    \( \rr \in \iadv \).
Moreover,
    \( \iadv \subseteq \cadv \) (\cref{def:valid-attackers}),
    so
    \( \rr \in \cadv \).
Finally, since
    \( \cadv \models \cctx \),
    \( \cctx \models \rr \sactsfor \cc \),
    and
    \( \rr \in \cadv \),
    we have
    \( \cc \in \cadv \)
    as desired.
\end{pcases}
 \end{proof}

Proofs can be found in the technical report~\cite{ifc-delegation-tr}.
In the absence of asymmetric delegation, \cref{thm:alternative-uncompromised}
reduces to a simple acts-for check ($\ctx \models \q \sactsfor \p$),
which prior systems rely on~\cite{zm01b, nmifc}.

\section{Algorithms}
\label{sec:algorithm}

In IFC systems that incorporate delegation, the acts-for relation $\ctx
\models \p_1 \sactsfor \p_2$ is frequently checked by label-inference

procedures~\cite{jif} and even at run time,
where it can impose significant overhead~\cite{sif07}.
Unfortunately, its semantic definition quantifies over the
potentially infinite set of all attackers,
which makes direct use of the definition infeasible in practice.
Similarly, the definition of uncompromised labels
$\Ctx \models \uncompromised{\lbl}$ does not yield an algorithm.

In this section, we assume oracle access to syntactic lattice operations
($\actsfor$, \strongest, \weakest, $\wedge$, $\vee$) of \p!,
and propose sound and complete algorithms
that check for the aforementioned relations.
Proofs are available in the technical report~\cite{ifc-delegation-tr}.

\subsection{Acts-for Algorithm}
\Cref{fig:actsfor-algorithm} gives an algorithm for deciding
$\ctx \models \p \sactsfor \q$.
We write $\ctx \vdash \p \sactsfor \q$ to denote this algorithmic system.

\begin{algorithm}[Acts-for $\ctx \vdash \p \sactsfor \q$]
  \label{fig:actsfor-algorithm}
  \begin{mathpar}
  \inferrule
    [\named{$\p!$-Axiom}{rule:p-actsfor-axiom}]
    {
      \p \actsfor \q
    }
    {
      \emptyList \vdash \p \sactsfor \q
    }

  \inferrule
    [\named{$\p!$-Delegation}{rule:p-actsfor-delegation}]
    {
      \ctx \vdash \p \wedge \q' \sactsfor \q
      \\
      \ctx \vdash \p \sactsfor \q \vee \p'
    }
    {
      \ctx, \p' \delegatedBy \q' \vdash \p \sactsfor \q
    }
\end{mathpar}
 The algorithm recursively applies the lattice axioms and
\cref{rule:p-actsfor-delegation} until the delegation context is
empty for all sub-cases.
The acts-for relation holds when \cref{rule:p-actsfor-axiom}
applies to all sub-cases.
\end{algorithm}

\begin{toappendix}
  \begin{lemma}
  \label{thm:context-extension}
  If
  \( \ctx_1 \models \p \sactsfor \q \),
  then
  \( (\ctx_1, \ctx_2) \models \p \sactsfor \q \)
  for any \( \ctx_2 \).
\end{lemma}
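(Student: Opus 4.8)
The plan is to argue by monotonicity of the attacker model in the delegation context: extending a context can only add constraints on attackers, so it can only shrink the set of consistent attackers, and a universally quantified acts-for statement that holds over a larger attacker set automatically holds over every subset.

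Concretely, first I would observe that, reading a delegation context as the set of acts-for pairs it lists, $\ctx_1 \subseteq (\ctx_1, \ctx_2)$, so the consistency requirement of \cref{def:consistent-attacker} instantiated at $(\ctx_1, \ctx_2)$ entails the one instantiated at $\ctx_1$: any attacker $\sadv$ that respects every delegation in $(\ctx_1,\ctx_2)$ in particular respects every delegation in $\ctx_1$. Hence $\sadv \models (\ctx_1,\ctx_2)$ implies $\sadv \models \ctx_1$, giving the containment of attacker models $\restrict{\sadv!}{(\ctx_1,\ctx_2)} \subseteq \restrict{\sadv!}{\ctx_1}$. Next, unfolding \cref{def:acts-for-delegation}, the hypothesis $\ctx_1 \models \p \sactsfor \q$ says the implication $\p \in \sadv \implies \q \in \sadv$ holds for every $\sadv \in \restrict{\sadv!}{\ctx_1}$; restricting the quantifier along the above containment, it still holds for every $\sadv \in \restrict{\sadv!}{(\ctx_1,\ctx_2)}$, which is exactly $(\ctx_1,\ctx_2) \models \p \sactsfor \q$.

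I do not expect a genuine obstacle: this lemma is the attacker-level restatement of the slogan ``more trust assumptions, fewer reasonable attackers'' already used around \cref{thm:correctness} and the hyperproperty transformer. The one place that calls for a little care is the reading of \cref{def:consistent-attacker} --- one should confirm that its condition ranges uniformly over all pairs relevant to the context (the lattice axioms together with the listed delegations), so that appending $\ctx_2$ genuinely strengthens the constraint rather than interacting with the lattice structure. Once that is settled, the proof is the two-line set-inclusion argument above. Alternatively, one could route through \cref{thm:correctness}: $\quotient{\p!}{\equiv_{(\ctx_1,\ctx_2)}}$ is a further quotient of $\quotient{\p!}{\equiv_{\ctx_1}}$ and the quotient map is order-preserving, so the acts-for judgment transports along it; but the direct attacker-set argument is shorter.
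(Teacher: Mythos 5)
Your proposal is correct and is exactly the paper's argument: the paper's proof is the one-liner that $\restrict{\sadv!}{\ctx_1, \ctx_2} \subseteq \restrict{\sadv!}{\ctx_1}$, and your write-up simply unfolds the definitions behind that containment and the restriction of the universal quantifier. No gaps.
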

\begin{proof}
  Immediate since
  \(
    \restrict{\sadv!}{\ctx_1, \ctx_2}
    \subseteq
    \restrict{\sadv!}{\ctx_1}
  \).
\end{proof}

\begin{corollary}
  \label{thm:syntactic-to-semantic-acts-for}
  If
  \( \p \actsfor \q \),
  then
  \( \ctx \models \p \sactsfor \q \)
  for any \ctx.
\end{corollary}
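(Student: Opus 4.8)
The plan is to unfold the two governing definitions and observe that the claim is essentially immediate. To establish $\ctx \models \p \sactsfor \q$, I would follow \cref{def:acts-for-delegation}: fix an arbitrary consistent attacker $\sadv \in \restrict{\sadv!}{\ctx}$, assume $\p \in \sadv$, and aim to derive $\q \in \sadv$. The crux is reading \cref{def:consistent-attacker} correctly — an attacker consistent with $\ctx$ must respect the acts-for edge $\p \actsfor \q$ whenever that edge holds \emph{in the lattice} $\p!$ \emph{or} appears syntactically in $\ctx$. Since by hypothesis $\p \actsfor \q$ already holds in $\p!$, the first disjunct of that condition is satisfied no matter what $\ctx$ is, so consistency of $\sadv$ directly gives $\p \in \sadv \implies \q \in \sadv$, hence $\q \in \sadv$. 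As $\sadv$ was arbitrary, this yields $\ctx \models \p \sactsfor \q$.

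An alternative route, more in keeping with the corollary's placement right after \cref{thm:context-extension}, is to run exactly this argument only in the special case $\ctx = \emptyList$ to obtain $\emptyList \models \p \sactsfor \q$, and then invoke \cref{thm:context-extension} with $\ctx_1 = \emptyList$ and $\ctx_2 = \ctx$ to conclude $\ctx \models \p \sactsfor \q$ for any $\ctx$. Either presentation works; I would likely use the second, since it reuses the lemma just proved and keeps the new content minimal.

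There is no genuine obstacle here — the only thing demanding attention is not to misread \cref{def:consistent-attacker} as constraining attackers solely on delegation-context edges, when in fact it also forces consistency along every acts-for edge of $\p!$ itself. One minor edge case deserves a clause: if $\ctx$ is self-contradictory then $\restrict{\sadv!}{\ctx}$ may be empty, but in that situation the universally quantified statement of \cref{def:acts-for-delegation} holds vacuously, so the conclusion still stands.
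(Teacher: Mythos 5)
Your proposal is correct and, in the route you say you would actually use, matches the paper's proof exactly: the paper establishes \( \emptyList \models \p \sactsfor \q \) (citing \cref{thm:attacker}, whose upward-closure property is precisely the first disjunct of \cref{def:consistent-attacker} that you unfold) and then applies \cref{thm:context-extension}. Your direct one-step argument is also valid and your remark about the vacuous case of an empty attacker set is a harmless extra.
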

\begin{proof}
  We have \( \emptyList \models \p \sactsfor \q \) by \cref{thm:attacker}.
  The result then follows by \cref{thm:context-extension}.
\end{proof}

\begin{lemma}
  \label{thm:no-delegation}
  If
  \( \emptyList \models \p \sactsfor \q \),
  then
  \( \p \actsfor \q \).
\end{lemma}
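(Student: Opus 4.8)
The plan is to prove the contrapositive: assuming $\p \not\actsfor \q$, I will exhibit a consistent attacker for the empty context that controls $\p$ but not $\q$, which by \cref{def:acts-for-delegation} witnesses $\emptyList \not\models \p \sactsfor \q$. By \cref{thm:attacker} instantiated at $\ctx = \emptyList$, these attackers are exactly the prime filters of $\p!$, so concretely I need a prime filter of $\p!$ that contains $\p$ but not $\q$. The workhorse is the prime ideal/filter separation theorem for distributive lattices, \cref{thm:prime-filter}.

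First I would record that the lattice of principals $\p!$, ordered by $\actsfor$, is a distributive lattice with meet $\wedge$, join $\vee$, bottom $\strongest$, and top $\weakest$; consequently the principal filter $\upset{\p} = \setdef{\rr \in \p!}{\p \actsfor \rr}$ is a filter and the principal ideal $\downset{\q} = \setdef{\rr \in \p!}{\rr \actsfor \q}$ is an ideal (both immediate from the lattice axioms). Next I would check $\upset{\p} \cap \downset{\q} = \emptyset$: any $\rr$ in the intersection would satisfy $\p \actsfor \rr$ and $\rr \actsfor \q$, hence $\p \actsfor \q$ by transitivity, contradicting the hypothesis of the contrapositive.

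Then I would apply \cref{thm:prime-filter} to the ideal $\downset{\q}$ and the filter $\upset{\p}$, obtaining a prime filter $Q \supseteq \upset{\p}$ and a prime ideal $P \supseteq \downset{\q}$ with $P \cap Q = \emptyset$. The needed properties of $Q$ then fall out: $\p \in \upset{\p} \subseteq Q$, so $Q$ controls $\p$; $\q \in \downset{\q} \subseteq P$ together with $P \cap Q = \emptyset$ gives $\q \notin Q$; and $\strongest \in \downset{\q} \subseteq P$ gives $\strongest \notin Q$, so $Q$ is a proper prime filter and therefore a member of $\restrict{\sadv!}{\emptyList}$ by \cref{thm:attacker}. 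Since $\p \in Q$ but $\q \notin Q$, \cref{def:acts-for-delegation} yields $\emptyList \not\models \p \sactsfor \q$, which is exactly the contrapositive.

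I do not expect a serious obstacle; once \cref{thm:prime-filter} is in hand, the rest is bookkeeping about the direction of $\actsfor$ and about which of $P$ and $Q$ sees $\p$ versus $\q$. The one point deserving care is why quantifying over \emph{prime} filters in \cref{def:acts-for-delegation} suffices — i.e., why we cannot simply use $\upset{\p}$ itself — and this is precisely what \cref{thm:prime-filter} provides: it upgrades the principal filter $\upset{\p}$ to a prime filter while still keeping $\q$ (and $\strongest$) out. A minor secondary subtlety is ensuring the resulting filter is proper, which is why we separate against $\downset{\q}$ rather than against $\{\q\}$ alone.
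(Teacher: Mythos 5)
Your proof is correct and follows essentially the same route as the paper's: both separate the filter $\upset{\p}$ from the ideal $\downset{\q}$ (disjoint because $\p \nactsfor \q$) via \cref{thm:prime-filter} to obtain a prime filter containing $\p$ but not $\q$, which is a consistent attacker for the empty context by \cref{thm:attacker}. The only difference is cosmetic — contrapositive versus contradiction — and your extra remark on properness of the resulting filter is a harmless refinement.
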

\begin{proof}
  Assume, for contradiction, that \( \p \nactsfor \q \).
The set
  \( \upset{\p} = \setdef{\p' \in \p!}{\p \actsfor \p'} \)
  is a filter of \p!, and
  \( \downset{\q} = \setdef{\q' \in \p!}{\q' \actsfor \q} \)
  is an ideal of \p!.
Moreover, \( \upset{\p} \) and \( \downset{\q} \) are disjoint since
  \( \p \nactsfor \q \).
By \cref{thm:prime-filter}, there exists a prime filter
  \( \p* \supseteq \upset{\p} \) that is also disjoint from \( \downset{\q} \).
This means \( \p \in \p* \) and \( \q \not\in \p* \).
However, \( \p* \in \adv! \) by \cref{thm:attacker}
  (attackers are prime filters)
  and \( \p* \models \emptyList \) trivially,
  which contradicts the assumption
  \( \emptyList \models \p \sactsfor \q \).
\end{proof}

\begin{lemma}
  \label{thm:delegation-to-meet}
  If
  \( (\ctx, \p_1 \delegatedBy \q_1) \models \p_2 \sactsfor \q_2 \),
  then
  \( \ctx \models \p_2 \wedge q_1 \sactsfor \q_2 \).
\end{lemma}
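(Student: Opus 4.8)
The plan is to unfold the semantic definition of $\sactsfor$ (\cref{def:acts-for-delegation}) and argue pointwise over consistent attackers. So let $\sadv \in \restrict{\sadv!}{\ctx}$ be an arbitrary attacker consistent with $\ctx$, and suppose $\p_2 \wedge \q_1 \in \sadv$; the goal is to show $\q_2 \in \sadv$, and then conclude by the arbitrariness of $\sadv$.

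The first step is to observe that an attacker consistent with any context — in particular with $\ctx$ — is upward-closed under the syntactic acts-for order of $\p!$, since that closure is one of the conjuncts built into \cref{def:consistent-attacker} (take the disjunct ``$\p \actsfor \q$''). Because $\p_2 \wedge \q_1 \actsfor \p_2$ and $\p_2 \wedge \q_1 \actsfor \q_1$ are lattice axioms, from $\p_2 \wedge \q_1 \in \sadv$ we obtain both $\p_2 \in \sadv$ and $\q_1 \in \sadv$.

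The key step is then to notice that, because $\q_1 \in \sadv$, the attacker $\sadv$ is automatically consistent with the extended context $(\ctx, \p_1 \delegatedBy \q_1)$: the only additional constraint imposed by the new delegation is ``$\p_1 \in \sadv \implies \q_1 \in \sadv$'', and this holds trivially since its conclusion $\q_1 \in \sadv$ already holds. Hence $\sadv \in \restrict{\sadv!}{(\ctx, \p_1 \delegatedBy \q_1)}$, so the hypothesis $(\ctx, \p_1 \delegatedBy \q_1) \models \p_2 \sactsfor \q_2$ applies to $\sadv$ and gives $\p_2 \in \sadv \implies \q_2 \in \sadv$; combined with $\p_2 \in \sadv$ from the previous step, this yields $\q_2 \in \sadv$, as required.

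I do not expect a real obstacle here: the lemma is essentially the observation that ``assuming $\q_1$ is attacker-controlled'' already subsumes ``assuming the delegation $\p_1 \actsfor \q_1$''. The only point needing care is the direction of the delegation — the new constraint is $\p_1 \in \sadv \Rightarrow \q_1 \in \sadv$, not its converse — since the argument would break otherwise, and this is also precisely why the conclusion meets $\p_2$ with the delegatee's target $\q_1$ rather than with $\p_1$. This lemma (together with its companion about $\q_2 \vee \p_1$) is what underlies the soundness of \cref{rule:p-actsfor-delegation}.
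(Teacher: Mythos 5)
Your proof is correct and follows essentially the same route as the paper's: unfold the semantics, use upward-closure of the attacker to split $\p_2 \wedge \q_1 \in \sadv$ into $\p_2, \q_1 \in \sadv$, observe that $\q_1 \in \sadv$ makes $\sadv$ consistent with the extended context, and apply the hypothesis. The only cosmetic difference is that you justify upward-closure directly from \cref{def:consistent-attacker} while the paper cites \cref{thm:attacker}; both are fine.
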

\begin{proof}
  Let $\sadv$ be an attacker such that $\sadv \models \ctx$ and
  $\p_2 \wedge q_1 \in \sadv$.
We need to show $\q_2 \in \sadv$.
Since $\p_2 \wedge q_1 \in \sadv$ and $\sadv$ is upward-closed
  (\cref{thm:attacker}), we have $\p_2, \q_1 \in \sadv$.
From $\sadv \models \ctx$ and $\q_1 \in \sadv$,
  we get $\sadv \models \ctx, \p_1 \delegatedBy \q_1$.
Using this, and the fact that $\p_2 \in \sadv$,
  we invoke our primary assumption to get $\q_2 \in \sadv$.
\end{proof}

\begin{lemma}
  \label{thm:delegation-to-join}
  If
  \( (\ctx, \p_1 \delegatedBy \q_1) \models \p_2 \sactsfor \q_2 \),
  then
  \( \ctx \models \p_2 \sactsfor \q_2 \vee \p_1 \).
\end{lemma}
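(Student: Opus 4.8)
The plan is to unfold \cref{def:acts-for-delegation} and reason directly about consistent attackers, reusing the structure of the proof of \cref{thm:delegation-to-meet} but replacing its automatic consistency step with a case split. Fix an attacker $\sadv \in \restrict{\sadv!}{\ctx}$ with $\p_2 \in \sadv$; the goal is to show $\q_2 \vee \p_1 \in \sadv$. Recall from \cref{thm:attacker} that every such $\sadv$ is a prime filter, hence upward closed with respect to $\actsfor$, and that $\q_2 \actsfor \q_2 \vee \p_1$ and $\p_1 \actsfor \q_2 \vee \p_1$ hold in $\p!$. So it suffices to show that $\sadv$ contains $\q_2$ or $\p_1$.

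First I would split on whether $\p_1 \in \sadv$. If $\p_1 \in \sadv$, upward closure gives $\q_2 \vee \p_1 \in \sadv$ and that case is done. If $\p_1 \notin \sadv$, then the extra delegation $\p_1 \delegatedBy \q_1$ is vacuously consistent with $\sadv$ (its antecedent $\p_1 \in \sadv$ fails), so $\sadv \models (\ctx, \p_1 \delegatedBy \q_1)$. Applying the hypothesis $(\ctx, \p_1 \delegatedBy \q_1) \models \p_2 \sactsfor \q_2$ to $\sadv$ together with $\p_2 \in \sadv$ yields $\q_2 \in \sadv$, and upward closure again gives $\q_2 \vee \p_1 \in \sadv$.

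I expect no real obstacle: once the attacker-level definitions are unfolded, the argument is a two-line case analysis. The one subtlety worth stating explicitly is the orientation of the order — attackers are filters (upward closed under $\actsfor$), not ideals, so $\q_2 \vee \p_1$ sits above both $\q_2$ and $\p_1$ and membership propagates upward, not downward. The only genuine difference from \cref{thm:delegation-to-meet} is that there the needed consistency of the added delegation came for free, because $\q_1 \in \sadv$ was forced by $\p_2 \wedge \q_1 \in \sadv$; here nothing forces $\q_1 \in \sadv$, so the case distinction on whether $\p_1 \in \sadv$ is what does the corresponding work.
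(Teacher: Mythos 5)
Your proposal is correct and follows essentially the same route as the paper's own proof: the same case split on whether $\p_1 \in \sadv$, with upward closure of the prime filter handling the first case and vacuous consistency of the added delegation enabling the hypothesis in the second. Your closing observation about why the case split is needed here but not in \cref{thm:delegation-to-meet} is accurate and matches the structure of the paper's two arguments.
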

\begin{proof}
  Let $\sadv$ be an attacker such that $\sadv \models \ctx$ and
  $\p_2 \in \sadv$.
We need to show $\q_2 \vee \p_1 \in \sadv$.
If $\p_1 \in \sadv$, then $\q_2 \vee \p_1 \in \sadv$ since $\sadv$
  is upward-closed (\cref{thm:attacker}), so assume $\p_1 \not\in \sadv$.
Then, $\sadv \models \ctx, \p_1 \delegatedBy \q_1$ and we assumed
  $\p_2 \in \sadv$, so we can invoke our primary assumption to get
  $\q_2 \in \sadv$.
Since $\sadv$ is upward-closed, this gives $\q_2 \vee \p_1 \in \sadv$.
\end{proof}
 \end{toappendix}

\begin{theoremrep}[Correctness of Acts-for]
  \label{thm:acts-for-correctness}
  \Cref{fig:actsfor-algorithm} terminates,
  and it is sound and complete with respect to
  \cref{def:acts-for-delegation}:
  \(
    \ctx \vdash \p \sactsfor \q
    \iff
    \ctx \models \p \sactsfor \q
  \).
\end{theoremrep}
\begin{proof}
  We first prove termination.
The algorithm terminates because each time \cref{rule:p-actsfor-delegation} is
applied, the delegation context gets smaller.
Since all delegation contexts are finite, \cref{rule:p-actsfor-delegation} is
applied at most finitely many times until it reaches the base case
\cref{rule:p-actsfor-axiom}.

We prove soundness by induction on the derivation of
\( \ctx \vdash \p \sactsfor \q \).

\begin{pcases}
  \prule{rule:p-actsfor-axiom}.
    We have \( \p \actsfor \q \) by inversion on the derivation.
The results follows from \cref{thm:syntactic-to-semantic-acts-for}.

  \prule{rule:p-actsfor-delegation}.
    We have
    \( \ctx = (\ctx', \p' \delegatedBy \q') \),
    \( \ctx' \vdash \p \wedge \q' \sactsfor \q \),
    and
    \( \ctx' \vdash \p \sactsfor \q \vee \p' \).
Induction gives
    \( \ctx' \models \p \wedge \q' \sactsfor \q \)
    and
    \( \ctx' \models \p \sactsfor \q \vee \p' \).
Let $\sadv$ be an attacker such that
    $\sadv \models \ctx', \p' \delegatedBy \q'$
    and
    $\p \in \sadv$.
We need to show $\q \in \sadv$.

    If $\p' \in \sadv$, then $\q' \in \sadv$ since
    $\sadv \models \p' \delegatedBy \q'$.
Since $\p, \q' \in \sadv$,
    \cref{thm:attacker} gives $\p \wedge \q' \in \sadv$.
The first induction hypothesis then gives $\q \in \sadv$.

    Otherwise, $\p' \not\in \sadv$.
The second induction hypothesis gives $\q \vee \p' \in \sadv$.
\Cref{thm:attacker} then implies $\q \in \sadv$
    since we cannot have $\p' \not\in \sadv$ and $\q \not\in \sadv$
    but $\q \vee \p' \in \sadv$.
\end{pcases}

We prove completeness by induction on $\ctx$.
\begin{pcases}
  \pcase{$\ctx = \emptyList$}
    Assume $\emptyList \models \p \sactsfor \q$.
Then $\p \actsfor \q$ by \cref{thm:no-delegation}.
Applying \cref{rule:p-actsfor-axiom}, we get
    $\emptyList \vdash \p \sactsfor \q$
    as desired.

  \pcase{$\ctx = \ctx', \p' \delegatedBy \q'$}
    Assume $\ctx', \p' \delegatedBy \q' \models \p \sactsfor \q$.
By \cref{thm:delegation-to-meet,thm:delegation-to-join}, we get
    $\ctx' \models \p \wedge \q' \sactsfor \q$
    and
    $\ctx' \models \p \sactsfor \q \vee p'$.
By induction, we get
    $\ctx' \vdash \p \wedge \q' \sactsfor \q$
    and
    $\ctx' \vdash \p \sactsfor \q \vee p'$.
We can now apply \cref{rule:p-actsfor-delegation} to get
    $\ctx', \p' \delegatedBy \q' \vdash \p \sactsfor \q$
    as desired.
  \qedhere
\end{pcases}
 \end{proof}

\subsection{NMIF Algorithm}

Neither the semantic definition of uncompromised labels
(\cref{def:uncompromised-labels}) nor their alternative characterization
(\cref{thm:alternative-uncompromised}) lends itself to an algorithmic implementation.
The semantic definition quantifies over all valid attackers (a potentially infinite set),
and the alternative characterization conjures up an intermediate principal.
Our solution is to use the alternative characterization
but with a ``best principal.''
For that, we use $\minrep{\p}$, the highest-authority principal
in the equivalence class of \p.\footnote{Recall that higher authority is lower in the authority lattice,
thus the use of $\min$ as opposed to $\max$.}

\begin{definition}[Minimal Principal]
  \label{def:min-element}
  \( \minrep[\ctx]{\p} \in \p! \)
  is the (necessarily) unique principal such that
  \( \ctx \models \p \sactsfor \q \)
  if and only if
  \( \minrep[\ctx]{\p} \actsfor \q \)
  for any $\q \in \q!$.
\end{definition}

We must demand more structure on \p! to compute $\minrep{\p}$.
Specifically, we rely on an oracle that returns \emph{join-prime} factorizations
of arbitrary principals.
Intuitively, a join-prime principal cannot be written as the join of other principals.
In finite lattices, these are the principals of the form
\( \p = \q_1 \wedge \dots \wedge \q_n \).
Factorization oracles arise trivially in existing implementations of IFC models,
as these are based on lattices with a finite set of principal names~\cite{jif, dclabels, flam, viaduct-pldi21}.

\begin{algorithm}[Min $\minrep{\p} = \q$]
  \label{fig:min-max-algorithm}
  \begin{mathpar}
  \inferrule
    [\named{Min-Base}{rule:min-base}]
    {
      \joinprime{\p}
      \\
      \forall (\p' \actsfor \q') \in \ctx \centerdot
        \p \not\actsfor \p'
    }
    {
      \minrep[\ctx]{\p} = \p
    }

  \inferrule
    [\named{Min-Pick}{rule:min-pick}]
    {
      \joinprime{\p}
      \\
      \p \actsfor \p'
    }
    {
      \minrep[\ctx, \p' \delegatedBy \q']{\p} = \minrep[\ctx]{\p \wedge \q'}
    }

  \inferrule
    [\named{Min-Factor}{rule:min-factor}]
    {
      \neg\joinprime{\p}
      \\
      \p = \p_1 \vee \dots \vee \p_n
      \\
      \forall i \in \enumSet{n} \centerdot \joinprime{\p_i}
    }
    {
      \minrep[\ctx]{\p} = \bigvee_{i \in \enumSet{n}}{\minrep[\ctx]{\p_i}}
    }
\end{mathpar}
 \end{algorithm}

The rules from \Cref{fig:min-max-algorithm} can be applied in any
order without backtracking because of the uniqueness of
$\minrep[\ctx]{\p}$.
Moreover, all derivations are finite since either the size of $\ctx$ decreases,
or we switch from a reducible element to a finite set of irreducible elements.

\begin{toappendix}
  \begin{lemma}
  \label{thm:meet-to-delegation}
  If
  \( \ctx \models \p_2 \wedge q_1 \sactsfor \q_2 \)
  and
  \( \ctx \models \p_2 \sactsfor p_1 \),
  then
  \( (\ctx, \p_1 \delegatedBy \q_1) \models \p_2 \sactsfor \q_2 \).
\end{lemma}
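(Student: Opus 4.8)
The plan is to unfold \cref{def:acts-for-delegation} for the extended context and argue pointwise over attackers. I would fix an arbitrary $\sadv \in \restrict{\sadv!}{(\ctx, \p_1 \delegatedBy \q_1)}$ with $\p_2 \in \sadv$ and show $\q_2 \in \sadv$. Since dropping a context entry only enlarges the set of consistent attackers, $\sadv$ also lies in $\restrict{\sadv!}{\ctx}$, so both hypotheses of the lemma apply to it.

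The proof is then a short chain of membership facts. First, the second hypothesis $\ctx \models \p_2 \sactsfor \p_1$ together with $\p_2 \in \sadv$ gives $\p_1 \in \sadv$ by \cref{def:acts-for-delegation}. Second, $\sadv$ is consistent with the added entry $\p_1 \delegatedBy \q_1$, so the consistency condition of \cref{def:consistent-attacker} (as used in the proofs of \cref{thm:delegation-to-meet} and \cref{thm:delegation-to-join}) forces $\q_1 \in \sadv$ now that $\p_1 \in \sadv$. Third, by \cref{thm:attacker} every attacker is a (prime) filter of the quotient lattice and hence closed under finite meets, so $\p_2 \in \sadv$ and $\q_1 \in \sadv$ yield $\p_2 \wedge \q_1 \in \sadv$. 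Finally, applying the first hypothesis $\ctx \models \p_2 \wedge \q_1 \sactsfor \q_2$ to $\sadv$, which now contains $\p_2 \wedge \q_1$, gives $\q_2 \in \sadv$, completing the argument.

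I do not expect a genuine obstacle: this lemma is essentially the converse of \cref{thm:delegation-to-meet} and \cref{thm:delegation-to-join} taken together, and the only care needed is (i) noting that membership in $\restrict{\sadv!}{(\ctx, \p_1 \delegatedBy \q_1)}$ simultaneously delivers consistency with $\ctx$ (so the hypotheses fire) and the extra constraint contributed by $\p_1 \delegatedBy \q_1$, and (ii) the appeal to meet-closure of attackers via \cref{thm:attacker}. It is worth noting that the hypothesis $\ctx \models \p_2 \sactsfor \p_1$ is what ``activates'' the delegation entry: it is stronger than the $\ctx \models \p_2 \sactsfor \q_2 \vee \p_1$ that an exact converse of \cref{thm:delegation-to-join} would need, but this stronger form is exactly what is available --- through a side condition like $\p \actsfor \p'$ --- when the lemma is later used to justify \cref{rule:min-pick}.
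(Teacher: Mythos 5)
Your proof is correct and is exactly the argument the paper intends: the paper's own proof is the one-line remark that the claim ``follows trivially from \cref{thm:attacker} since attackers are closed under $\wedge$,'' and your four-step chain ($\p_1 \in \sadv$ from the second hypothesis, $\q_1 \in \sadv$ from consistency with the new entry, $\p_2 \wedge \q_1 \in \sadv$ from meet-closure, then the first hypothesis) is precisely the expansion of that remark. No gaps.
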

\begin{proof}
  Follows trivially from \cref{thm:attacker} since attackers are
  closed under $\wedge$.
\end{proof}
 \end{toappendix}

\begin{theoremrep}[Correctness of Min]
  \label{thm:min-max-correctness}
  \Cref{fig:min-max-algorithm} terminates,
  and it is sound and complete with respect to \cref{def:min-element}.
\end{theoremrep}
\begin{proof}
  We first prove termination.
Define the following relation $\succeq$ between expressions
in the form of $\minrep[\ctx]{\p}$:
\begin{align*}
  \minrep[\ctx]{\p} \succeq \minrep[\ctx']{\p} ~~&\textbf{if}~~ \ctx' \subseteq \ctx \\
  \minrep[\ctx]{\p} \succeq \minrep[\ctx]{\q} ~~&\textbf{if}~~ \joinprime{\q}  \\
\end{align*}
This is a well-founded relation (no infinite descending chain):
because of the uniqueness of $\joinprime{\p}$,
fixing any specific $\ctx$, a chain cannot be longer than 2.
So each chain starting at $\minrep[\ctx]{\p}$ is no longer than
$2|\ctx|$, which is finite.
As \cref{rule:min-pick} and \cref{rule:min-factor} are both decreasing,
the algorithm eventually reaches the base case and then terminates.

Minimal elements are unique and \cref{fig:min-max-algorithm} always terminates,
so it suffices to prove soundness.
More concretely, whenever \cref{fig:min-max-algorithm} derives
\( \minrep[\ctx]{\p} = \p' \),
we need to show
\( \ctx \models \p \sactsfor \q \iff \p' \actsfor \q \)
for all \q.

Let \( \q \in \p! \). We proceed by induction on the derivation.
\begin{pcases}
  \prule{rule:min-base}
    We have
    \( \ctx = (\p_1 \delegatedBy \q_1, \dots, \p_n \delegatedBy \q_n) \),
    \( \minrep[\ctx]{\p} = \p \),
    \( \joinprime{\p} \),
    and
    \( \forall i \in \enumSet{n} \centerdot \p \nactsfor \p_i \).

    \begin{pcases}
      \pcase{$\implies$}
        Assume
        \( \ctx \models \p \sactsfor \q \).
We apply \cref{thm:delegation-to-join} \( n \) times and
        \cref{thm:no-delegation} once to get
        \( \p \actsfor \q \vee \p_1 \vee \dots \vee \p_n \).
Since \( \joinprime{\p} \) and \( \p \nactsfor \p_i \) for all \( i \),
        it must be the case that \( \p \actsfor \q \) as desired.

      \pcase{$\impliedby$}
        Immediate by \cref{thm:syntactic-to-semantic-acts-for}.
    \end{pcases}

  \prule{rule:min-pick}
    We have
    \( \ctx = (\ctx', \p' \delegatedBy \q') \),
    \( \minrep[\ctx]{\p} = \minrep[\ctx']{\p \wedge \q'} \),
    \( \joinprime{\p} \),
    and
    \( \p \actsfor \p' \).

    \begin{pcases}
      \pcase{$\implies$}
        Assume
        \( \ctx \models \p \sactsfor \q \).
We need to show
        \(  \minrep[\ctx']{\p \wedge \q'} \actsfor \q \).
\Cref{thm:delegation-to-meet} gives
        \( \ctx' \models \p \wedge \q' \sactsfor \q \).
We can then apply the induction hypothesis to get the desired result.

      \pcase{$\impliedby$}
        Assume
        \(  \minrep[\ctx']{\p \wedge \q'} \actsfor \q \).
We need to show
        \( \ctx \models \p \sactsfor \q \).
The induction hypothesis gives
        \( \ctx' \models \p \wedge \q' \sactsfor \q \).
\Cref{thm:syntactic-to-semantic-acts-for} gives
        \( \ctx' \models \p \sactsfor \p' \).
        Combining these with
        \cref{thm:meet-to-delegation} gives the desired result.
    \end{pcases}

  \prule{rule:min-factor}
    We have
    \( \p = \p_1 \vee \dots \vee \p_n \),
    \( \minrep[\ctx]{\p} = \bigvee_{i \in \enumSet{n}}{\minrep[\ctx]{\p_i}} \),
    and
    \( \forall i \in \enumSet{n} \centerdot \joinprime{\p_i} \).

    \begin{pcases}
      \pcase{$\implies$}
        Assume
        \( \ctx \models \p \sactsfor \q \).
We need to show
        \( \bigvee_{i \in \enumSet{n}}{\minrep[\ctx]{\p_i}} \actsfor \q \).
By the definition of $\vee$, we have
        \( \ctx \models \p_i \sactsfor \q \)
        for all \( i \in \enumSet{n} \).
The induction hypotheses then give
        \( \minrep[\ctx]{\p_i} \actsfor \q \)
        for all \( i \in \enumSet{n} \).
Finally, we use the definition of $\vee$ to get the desired result.

      \pcase{$\impliedby$}
        Assume
        \( \bigvee_{i \in \enumSet{n}}{\minrep[\ctx]{\p_i}} \actsfor \q \).
We need to show
        \( \ctx \models \p \sactsfor \q \).
By the definition of $\vee$, we have
        \( \minrep[\ctx]{\p_i} \actsfor \q \)
        for all \( i \in \enumSet{n} \).
The induction hypotheses then give
        \( \ctx \models \p_i \sactsfor \q \)
        for all \( i \in \enumSet{n} \).
Finally, we use the definition of $\vee$ to get
        \( \ctx \models \p_1 \vee \dots \vee \p_n \sactsfor \q \).
    \end{pcases}
\end{pcases}
 \end{proof}

Our NMIF algorithm simply combines
\cref{fig:actsfor-algorithm,fig:min-max-algorithm}.

\begin{algorithm}[Uncompromised Label Check
  \( \Ctx \vdash \uncompromised{\lbl} \)]
  
  \label{fig:nmifc-algorithm}
  \begin{mathpar}
  \inferrule
    {
      \cctx \vdash \minrep[\ictx]{\ic} \sactsfor \cc
    }
    {
      \pair{\cctx}{\ictx} \vdash \uncompromised{\pair{\cc}{\ic}}
    }
\end{mathpar}
 \end{algorithm}

\begin{theoremrep}[Correctness of Uncompromised Label Check]
  \Cref{fig:nmifc-algorithm} terminates,
  and it is sound and complete with respect to
  \cref{def:uncompromised-labels}.
\end{theoremrep}
\begin{proof}
  We show soundness and completeness separately.
\begin{pcases}
  \pcase{Soundness}
    Assume
    \( \cctx \vdash \minrep[\ictx]{\ic} \sactsfor \cc \).
Pick
    \( \rr = \minrep[\ictx]{\ic} \).
By \cref{def:min-element}
    (instantiated with
    \( \minrep[\ictx]{\ic} \actsfor \minrep[\ictx]{\ic} \)),
    \( \ictx \models \ic \sactsfor \minrep[\ictx]{\ic} \).
    By \cref{thm:acts-for-correctness},
    \( \cctx \models \minrep[\ictx]{\ic} \sactsfor \cc \).

  \pcase{Completeness}
    Assume there exists \( \rr \in \p! \) such that
    \( \ictx \models \ic \sactsfor \rr \)
    and
    \( \cctx \models \rr \sactsfor \cc \).
By \cref{def:min-element},
    \( \minrep[\ictx]{\ic} \actsfor \rr \).
By \cref{thm:attacker},
    \( \cctx \models \minrep[\ictx]{\ic} \sactsfor \rr \).
By transitivity,
    \( \cctx \models \minrep[\ictx]{\ic} \sactsfor \cc \).
Finally, by \cref{thm:acts-for-correctness},
    \( \cctx \vdash \minrep[\ictx]{\ic} \sactsfor \cc \).
\end{pcases}
 \end{proof}

\section{Label Inference}
\label{sec:inference}

In practical IFC systems, label inference is an important way to avoid
redundant user annotations, since it is secure to infer labels
of intermediate computations from their inputs and outputs.
Typically, label inference is performed by solving a system of constraints
over lattice elements~\cite{jif,sherrloc}.
IFC systems further benefit from \emph{bounded label polymorphism},
which allows user to write library code reusable at different security levels.
In this section, we show how to do label inference
directly over the algebraic model of
labels using the algorithms from \cref{sec:algorithm}.

\begin{figure}
  \begin{minipage}[t]{0.48\textwidth}
    \noindent
    \lstinputlisting[style=custom-small,lastline=8]{figs/polymorphism-average.code}
  \end{minipage}
  \hskip 2.5em
  \begin{minipage}[t]{0.43\textwidth}
    \noindent
    \lstinputlisting[style=custom-small,firstline=10,firstnumber=8]{figs/polymorphism-average.code}
  \end{minipage}
  \caption{The \lstinline{average} function is implicitly polymorphic
    over the labels of its arguments.}
  \label{fig:polymorphism-average}
\end{figure}

\subsection{Bounded Label Polymorphism}
\label{sec:label-polymorphism}

As in traditional type systems, allowing code that is generic over labels
increases expressiveness significantly.
Existing IFC-based languages like Jif~\cite{jif} and Flow Caml~\cite{flowcaml}
support bounded label polymorphism, which allows functions to be
parameterized over labels that are bounded by specified security levels.
Annotation burden on users can be further reduced by assuming information
flow from function arguments to return values by default.

In \cref{fig:polymorphism-average},
the annotation-free polymorphic function \lstinline{average} is
applied in \lstinline{main} to arguments with different
security labels.
Shown below is the same function with explicit annotations,
all of which can be inferred.
\lstinputlisting[style=custom-small]{figs/average-desugared.code}

Label inference assigns existential \emph{label variables}
to all unlabeled expressions and creates constraints
based on IFC typing rules.
The constraints are then solved by a constraint solver that
uses parameter bounds as delegation contexts.

In each function, polymorphic label variables are treated as label constants,
so solutions of label variables are expressed by both label constants
and polymorphic label variables.
Type checking at call sites ensures that
the parameter bounds are satisfied.
As functions have their own delegation contexts,
a new constraint system is solved for each function.

\subsection{Constraint Solver}
We describe a novel constraint solver that computes the
minimum-semantic-authority solution to constraint
systems with delegation contexts.
Minimum-authority solutions are desirable because they allow systems
to choose cheaper security enforcement mechanisms
\cite{zznm02,zcmz03,swift07,fr-popl08,fgr09,viaduct-pldi21}.

\begin{figure}
  \begin{minipage}[t]{.49\textwidth}
    \noindent
    \begin{syntax}
  \abstractCategory[L. Constants]{\lbl}
  \abstractCategory[L. Variables]{\lvar}

  \separate

  \category[L. Expressions]{\lexp}
  \alternative{\lbl}
  \alternative{\lvar}
  \alternative{\project{\lexp}} \\
  \alternative{\lexp_1 \join \lexp_2}
  \alternative{\lexp_1 \meet \lexp_2} \\
  \alternative{\lexp_1 \vee \lexp_2}
  \alternative{\lexp_1 \wedge \lexp_2}

  \category[L. Constraints]{\lcon}
  \alternative{\lexp_1 \flowsto \lexp_2}
  \alternative{\uncompromised{\lexp}}

  \categoryFromSet[Projections]{\proj}{\set{\cproj, \iproj}}
\end{syntax}
     \caption{Syntax of label constraints.}
    \label{fig:label-constraints}
  \end{minipage}
  \hfill
  \begin{minipage}[t]{.49\textwidth}
    \noindent
    \begin{syntax}
  \abstractCategory[P. Constants]{\p}
  \abstractCategory[P. Variables]{\pvar}

  \separate

  \category[P. Expressions]{\pexp}
  \alternative{\p}
  \alternative{\pvar}\\
  \alternative{\pexp_1 \vee \pexp_2}
  \alternative{\pexp_1 \wedge \pexp_2}\\
  \alternative{\p_1 \trustimply \pexp_2}
  \alternative{\minrep[\proj']{\pexp[\proj']}}

  \category[P. Constraints]{\pcon}
  \alternative{\pexp_1 \pactsfor \pexp_2}
\end{syntax}
     \vskip 1em
    \caption{Syntax of principal constraints.}
    \label{fig:principal-constraints}
  \end{minipage}
\end{figure}

\subsubsection{Label and Principal Constraints}
\Cref{fig:label-constraints} gives the syntax of the label constraint language.
Expressions in the constraint language include label constants and label variables,
authority projections,
as well as standard lattice operations.
A constraint either asserts that an expression flows to another,
or asserts that an expression is uncompromised.
We translate constraints over labels to constraints over principals
to leverage algorithms from \cref{sec:algorithm}.
\Cref{fig:principal-constraints} gives the syntax of the principal
constraint language.
The syntax includes a principal variable \pvar[\proj]
for each combination of label variable \lvar and projection \proj.
That is, \pvar[\cproj] represents the confidentiality of \lvar,
and \pvar[\iproj] represents \lvar's integrity.

We index expressions \pexp by the component \proj they represent.
This prevents expressions like $\pvar[\cproj]_1 \wedge \pvar[\iproj]_2$,
whose components are mixed.
Expressions include principal constants and principal variables,
as well as principal-lattice operations $\vee$ and $\wedge$.
The operation $\trustimply$ is called the
\emph{relative pseudocomplement} of the meet operation:
$\p_1 \trustimply \p_2$ is defined as the minimum-authority principal \p such that
$\p_1 \wedge \p \actsfor \p_2$.
We use $\trustimply$ to solve constraints of the form
$\pvar \wedge \p_1 \pactsfor \pexp_2$.The $\minrep[\proj]{}$ operation allows mixing integrity and confidentiality components;
we use it when solving for labels that must be uncompromised.
Principal constraints have the form
$\pexp_1 \pactsfor \pexp_2$,
which stands for $\ctx_{\proj} \models \pexp_1 \sactsfor \pexp_2$.

\Cref{fig:label-to-principal-constraints} gives rules for translating
label constraints to principal constraints.
The definition of $\getcomponent{\lexp}$ is a straightforward encoding of the
label-lattice operations.
Using $\pconstraint{\lexp}$,
we translate a flows-to ($\flowsto$) constraint to two
acts-for ($\actsfor$) constraints, one for each label component.
The constraint $\uncompromised{\lexp}$ follows from \cref{fig:nmifc-algorithm}.

\begin{figure}
  \begingroup \newcommand{\also}{,\;}\begin{judgments}
  \judgment{\pconstraint{\lcon} = \pcon_1, \dots, \pcon_n}
\end{judgments}
\begin{mathpar}
  \pconstraint{\lexp_1 \flowsto \lexp_2}
  =
  \getcomponent[\cproj]{\lexp_2}
    \Cactsfor
    \getcomponent[\cproj]{\lexp_1}
  \also
  \getcomponent[\iproj]{\lexp_1}
    \Iactsfor
    \getcomponent[\iproj]{\lexp_2}

  \pconstraint{\uncompromised{\lexp}}
  =
  \getcomponent[\iproj]{\lexp}
    \Iactsfor
    \minrep[\cproj]{\getcomponent[\cproj]{\lexp}}
\end{mathpar}
\endgroup
   \newcommand{\getcomponentprotected}[1]{{\getcomponent{#1}}}\begin{judgments}
  \judgment{\getcomponent{\lexp} = \pexp}
\end{judgments}
\hfill
\begin{minipage}{.48\textwidth}
  \begin{function}{\getcomponentprotected}
    \case{\lexp_1 \join \lexp_2}
    \getcomponent{
      \confidentiality{(\lexp_1 \wedge \lexp_2)}
      \wedge
      \integrity{(\lexp_1 \vee \lexp_2)}
    }

    \case{\lexp_1 \meet \lexp_2}
    \getcomponent{
      \confidentiality{(\lexp_1 \vee \lexp_2)}
      \wedge
      \integrity{(\lexp_1 \wedge \lexp_2)}
    }

    \case{\lexp_1 \vee \lexp_2}
    \getcomponent{\lexp_1} \vee \getcomponent{\lexp_2}

    \case{\lexp_1 \wedge \lexp_2}
    \getcomponent{\lexp_1} \wedge \getcomponent{\lexp_2}
  \end{function}
\end{minipage}
\hfill
\begin{minipage}{.48\textwidth}
  \begin{function}{\getcomponentprotected}
    \case{\pair{\p}{\q}}
    \begin{cases}
      \p & \text{if } \proj = \cproj
      \\
      \q & \text{if } \proj = \iproj
    \end{cases}

    \case{\lvar}
    \pvar[\proj]

    \case{\project[\proj']{\lexp}}
    \begin{cases}
      \getcomponent{\lexp} & \text{if } \proj = \proj'
      \\
      \weakest & \text{if } \proj \neq \proj'
    \end{cases}
  \end{function}
\end{minipage}
   \caption{Translating label constraints to principal constraints.}
  \label{fig:label-to-principal-constraints}
\end{figure}

\subsubsection{Solving Principal Constraints}

Our constraint solver requires the left-hand side of each constraint to be
atomic (a constant or a variable), that is, constraints
of the form
\( \p_1 \pactsfor \pexp_2 \)
and
\( \pvar_1 \pactsfor \pexp_2 \).
We exhaustively apply the equational axioms of Heyting algebra
(e.g., associativity, absorption, distributivity, etc.)
until no left-hand side of any constraint can be further simplified.
As equational axioms are syntactic rewrites,
it does not change the constraint system over the underlying algebra.
Moreover, this process always terminates,
and ensures that the left-hand side of each constraint either
is atomic or contains a meet ($\wedge$).\footnote{
  Translation rules in \cref{fig:label-to-principal-constraints}
  never generate constraints with $\trustimply$ or $\minrep[\proj]{\cdot}$
  on the left-hand side,
  and the constraint simplification eliminate all joins ($\vee$) on the left.
}

Constraint solving fails if the left-hand side of any constraint contains
a meet, since such systems do not have unique solutions.
For example, the system
\( \lvar_1 \wedge \lvar_2 \actsfor \alice \)
has no minimal solution:
we can assign
\( \set{ \lvar_1 \mapsto \alice, \lvar_2 \mapsto \weakest } \)
or
\( \set{ \lvar_1 \mapsto \weakest, \lvar_2 \mapsto \alice } \),
but neither solution is better than the other.
Compiler implementations need to either restrict the syntax of
polymorphic constraints, or report errors during constraint solving.

Once the simplification process succeeds, we extend
the algorithm of Rehof and Mogensen~\cite{RM96} for iteratively solving semi-lattice
constraints.
We initialize all principal variables to $\weakest$, and use unsatisfied
constraints to update variables repeatedly until a fixed point is reached,
using the rule:
\[
  \text{given}
  \quad
  \pvar \pactsfor \pexp
  ,
  \quad
  \text{set}
  \quad
  \pvar \coloneq \pvar \wedge \currentval{\pexp}
  ,
\]
where $\currentval{\pexp}$ is the value of \pexp according to the current
assignment.

Note that constraints that have constants \p on the left-hand
side are ignored during the fixed point computation.
Once a fixed-point solution is reached,
we perform the following check for each constraint with a constant left-hand side:
\[
  \text{given}
  \quad
  \p \pactsfor \pexp
  ,
  \quad
  \text{check}
  \quad
  \ctxof \models \p \sactsfor \currentval{\pexp}
  .
\]
We state and prove in the technical report \cite{ifc-delegation-tr}
that this process terminates with the minimum-authority solution
if all such constraints are satisfied;
otherwise, there is no valid solution.
\begin{theoremrep}[Correctness of Constraint Solver]
  \label{thm:constraint-solver-correctness}
  Constraint procedure always terminates with either minimal-authority solution
  or correctly fails.
\end{theoremrep}
\begin{proof}
  We note that the result of the algorithm is a map $M$ from
principal variables $\pvar$ to principal constants $\p$.

Now define a well-founded relation $\succeq$ over
the domain of maps $M$ from $\pvar$ to $\p$
where the update rule is decreasing:
\[M \succeq M' ~~\mathsf{if}~~ \forall \pvar
\centerdot M(\pvar) \actsfor M'(\pvar)\]
This is indeed a well-founded relation in practice
when $M$ is a finite map and $\p!$ is a finite lattice.

Then we formalize the update rule as a
function $f$ between maps $M$ given by
(here $D$ is the set of constraints):
\[f(M)(\pvar) = \bigwedge_{\pvar \pactsfor \pexp \in D}
{M(\pvar) \wedge M(\pexp)}\]

$f$ is a monotonic function by construction.
Since our algorithm repeatedly applies this function to $M$,
this forms a chain of $f^i(M)$, which, by Knaster--Tarski theorem,
reaches the greatest fixpoint.
Thus, this algorithm terminates with least-authority solution when it exists.
 \end{proof}

\subsection{Implementation}
\label{sec:implementation}

We modified the parser of the Viaduct compiler~\cite{viaduct-pldi21}
with the delegation syntax, and extended its
static analysis procedure with our label inference algorithm.
\footnote{
  Code available at:
  \ifblinded
  -anonymous link-
  \else
  \href{https://github.com/apl-cornell/viaduct}{https://github.com/apl-cornell/viaduct}.
  \fi
}

Because the original Viaduct constraint solver can only make syntactic comparison
($\actsfor$) between labels, it instantiates polymorphic variables
with constant principal names.
Therefore, Viaduct has to run a \emph{specialization procedure} to create
a monomorphic copy of a function at each call site.
In nested function calls, the number of monomorphic functions
created grows exponentially with the depth of calls.
Recursive calls need to be handled explicitly to ensure termination.

Thanks to delegation contexts, label inference no longer
requires monomorphic functions and can be done in one pass.
For each function call site, the specialization procedure
memoizes the argument labels,
and avoids creating duplicates of monomorphic functions that are instantiated
with the same polymorphic argument labels,
eliminating the need to treat recursive functions separately.
Without duplication, our
procedure creates a minimum-sized set monomorphic functions.

Empirically, type inference is fast. On all of the Viaduct
benchmarks, it terminates within 300
milliseconds on a
Macbook air 2023 with an M2 chip.

\section{Related Work}
\label{sec:related-works}

Expressive trust delegation has been widely investigated in
access control and capability systems~\cite{rbac-iflow,
li2003delegation},
where delegation is called \emph{role hierarchy}
\cite{rbac-iflow}.
Such systems support role adoption, which is a form of dynamic delegation.
However, access control systems generally do not connect to
information flow security properties.
Many prior IFC systems have delegation abstractions compatible with our framework,
but they either lack support for asymmetric delegation~\cite{nmifc,
viaduct-pldi21, zsm19}, or do not explore its effect
over hyperproperties~\cite{zm01b, msz04, msz06, flam, li2021generalpurposedynamicinformationflow}.

Our inference algorithm differs from prior implementations of syntax-directed IFC
label inference~\cite{jif, sherrloc, viaduct-pldi21} by operating
directly over the underlying algebra.
This algebraic approach enhances extensibility:
adding principals or delegations does not invalidate existing analysis.
\citet{Dolan}{algebraicSubtyping} proposes an inference algorithm for an
expressive algebraic type system with
type constructors and recursive function types.
However, their type system is inherently complex,
and the inference doesn’t produce easily interpretable representions
of types.
In contrast, our label system balances expressiveness with a
simple and effective inference algorithm.

FLAM~\cite{flam} proposes a label model and defines
\emph{robust authorization} to address security vulnerabilities
arising from dynamic delegation.
However, robust authorization is a proof-theoretic definition with no
semantic model.
\Citet{Arden and Myers}{flac} propose the FLAC calculus, based on the
FLAM label model, and prove robust declassification for a language that downgrades using
dynamic delegation.
Similarly, FLAC has no attacker semantics.
\Citet{Cecchetti et al.}{nmifc} define NMIF, but their
system cannot explicitly express delegations between atomic principals.

\section{Conclusion and Future Directions}
\label{sec:conclusion}

We present an algebraic semantic framework for IFC labels that models asymmetric delegation,
along with sound and complete algorithms
that enforce security properties and infer security annotations.
Our approach provides a solid foundation
for building modular, expressive and extensible IFC systems.

\ifacknowledgments
  \begin{credits}
    \subsubsection{\ackname}
This work was supported by the
National Science Foundation under
NSF grant 1704788. Any opinions, findings, conclusions, or recommendations expressed in this
material are those of the authors and do not necessarily reflect the views
of the National Science Foundation.
We also thank
Suraaj Kanniwadi
and
Ethan Cecchetti
for discussions and feedback.

\subsubsection{\discintname}

The authors have no competing interests to declare that are relevant to the
content of this article.
   \end{credits}
\fi

\finalpage

\bibliography{bibtex/pm-master}

\end{document}